\algrenewcommand\algorithmicrequire{\textbf{Precondition:}}
\algrenewcommand\algorithmicensure{\textbf{Postcondition:}}
\theoremstyle:=definition,remark,plain\do{%
        \expandafter\g@addto@macro\csname th@\theoremstyle\endcsname{%
            \addtolength\thm@preskip\parskip
            }%
        }
\definecolor{cornellred}{rgb}{0.7, 0.11, 0.11}
\definecolor{dgreen}{rgb}{0.0, 0.5, 0.0}
\definecolor{ballblue}{rgb}{0.13, 0.67, 0.8}
\definecolor{royalblue(web)}{rgb}{0.25, 0.41, 0.88}
\definecolor{bleudefrance}{rgb}{0.19, 0.55, 0.91}
\definecolor{royalazure}{rgb}{0.0, 0.22, 0.66}
\newtheorem{theorem}{Theorem}[section]
\newtheorem{lemma}[theorem]{Lemma}
\newtheorem*{lemma*}{Lemma}
\newtheorem{proposition}[theorem]{Proposition}
\newtheorem{observation}{Observation}
\newtheorem{claim}{Claim}
\theoremstyle{definition}
\theoremstyle{definition}
\newenvironment{remark}[1][Remark]{\begin{trivlist}
\item[\hskip \labelsep {\bfseries #1}]}{\end{trivlist}}
\DeclareMathOperator{\OPT}{\texttt{OPT}}
\DeclareMathOperator{\MC}{\textrm{MaxCut}}
\newcommand{\SDP}{\texttt{OPT-SDP}}
\newcommand{\Ws}{\sum_{(i,j)\in E}w_{ij}}
\newcommand{\OBJ}{\texttt{OBJ}}
\newcommand{\Ycal}{\mathcal{Y}}
\newcommand{\Ecal}{\mathcal{E}}
\newcommand{\Zcal}{\mathcal{Z}}
\newcommand{\AL}{\texttt{average-linkage}}
\newcommand{\nl}{\texttt{non-leaves}}
\DeclarePairedDelimiterX{\set}[1]\{\}{#1}
\let\Pr\relax
\DeclarePairedDelimiterXPP{\Pr}[1]{\mathbb{P}}[]{}{#1}
\DeclarePairedDelimiterXPP{\Ex}[1]{\mathbb{E}}[]{}{#1}
\DeclarePairedDelimiter\floor{\lfloor}{\rfloor}
\newcommand{\tw}{\mathcal{W}}
\newcommand{\xijt}[1]{x^{(#1)}_{ij}}
\newcommand{\vit}[2]{\mathbf{v}^{(#2)}_{#1}}
\newcommand{\norms}[1]{\lVert #1 \rVert_2^2}
\DeclareFontFamily{U}{matha}{\hyphenchar\font45}
\DeclareFontShape{U}{matha}{m}{n}{
      <5> <6> <7> <8> <9> <10> gen * matha
      <10.95> matha10 <12> <14.4> <17.28> <20.74> <24.88> matha12
      }{}
\DeclareSymbolFont{matha}{U}{matha}{m}{n}
\DeclareMathSymbol{\wedge}         {2}{matha}{"5E}
\DeclareMathSymbol{\vee}           {2}{matha}{"5F}
\title{Hierarchical Clustering better than Average-Linkage}
\author[$\dagger$]{Moses Charikar}
\author[$\dagger$]{Vaggos Chatziafratis}
\author[$\dagger$]{Rad Niazadeh}
\affil[$\dagger$]{Department of Computer Science, Stanford University,}
\begin{document}
\date{}
\maketitle
\newenvironment{myfont}{\fontfamily{pag}\selectfont}{\par}

\begin{abstract}
\normalsize
Hierarchical Clustering (HC) is a widely studied problem in exploratory data analysis, usually tackled by simple agglomerative procedures like average-linkage, single-linkage or complete-linkage. In this paper we focus on two objectives, introduced recently to give insight into the performance of average-linkage clustering:
a similarity based HC objective proposed by~\cite{joshNIPS} and a dissimilarity based HC objective proposed by~\cite{vincentSODA}.
In both cases, we present tight counterexamples showing that average-linkage cannot obtain better than $\tfrac{1}{3}$ and $\tfrac{2}{3}$ approximations respectively (in the worst-case), settling an open question raised in~\cite{joshNIPS}. 
This matches the approximation ratio of a random solution,
raising a natural question: {\em can we beat average-linkage for these objectives?} We answer this in the affirmative, giving two new algorithms based on semidefinite programming with provably better guarantees. 

\end{abstract}
\section{Introduction}
Hierarchical Clustering (HC) is a popular exploratory data analysis method with a variety of applications, ranging from image and text classification to analysis of social networks and financial markets. The undisputed queen of killer applications for HC is in phylogenetics~\citep[e.g. in][]{eisen1998cluster}, where  genomic similarity (or dissimilarity) patterns are used to create taxonomies of organisms, with the goal of shedding light on the evolution of species by understanding the ancestral tree of life.

The easiest way to view HC is as a recursive partitioning of a set of datapoints into successively smaller clusters represented by a \emph{dendrogram}; a rooted tree whose leaves are in one-to-one correspondence with the datapoints (see \Cref{fig:HC}). 

HC owes its widespread success to several advantages that this tree offers compared to the more traditional ``flat'' clustering approaches like $k$-means, $k$-median or $k$-center. In fact, it does not require a fixed number $k$ of clusters and provides richer information at all levels of granularity, simultaneously displayed in an intuitive form. Importantly, there are many fast and easy to implement algorithms commonly used in practice to find the tree. Examples are simple linkage-based agglomerative procedures, with \emph{Average-Linkage} being perhaps the most popular one~\citep[e.g. see][]{friedman2001elements}.

\begin{figure}[ht]
  \centering
  \includegraphics[width=0.65\columnwidth]{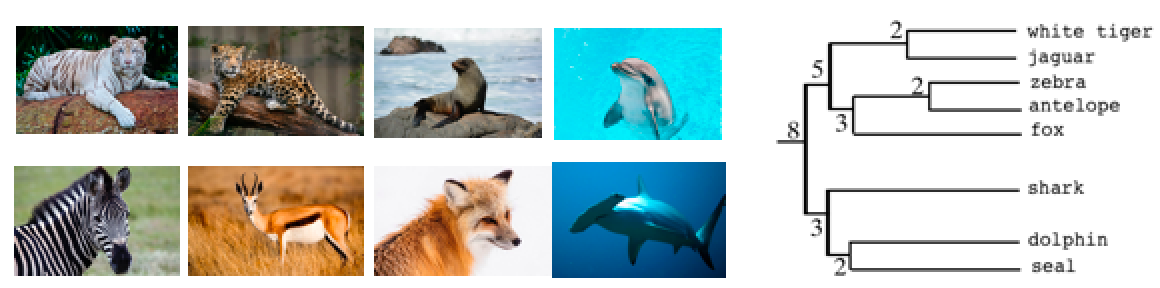}
     \caption{\footnotesize{HC dendrogram with 8 datapoints (leaves); numbers are the sizes of the clusters (tree nodes).}}
     \label{fig:HC}
\end{figure}

However, despite the immense focus on algorithms, there have been no provable guarantees accompanying their solutions, and this was partly because of the lack of objective functions to measure their qualities. To remedy this, 
\cite*{dasguptaSTOC} recently introduced and studied an interesting objective function for hierarchical clustering with a similarity measure. This objective favors cutting heavy edges deeper in the  clustering tree. More precisely, suppose $\{w_{ij}\}$ represents pairwise
similarity weights between the $n$ datapoints. \cite*{dasguptaSTOC}  proposes finding a hierarchical tree $T^*$ such that:
\begin{align}
T^*=\underset{\textrm{all trees}~T}{\mathrm{argmin}}~\sum_{(i,j)\in E}w_{ij}\cdot \lvert T_{ij}\rvert \label{obj1}\tag{\texttt{HC-OBJ-1}}
\end{align}
where $T_{ij}$ is the subtree rooted at the lowest common ancestor of $i,j$ in $T$ and $\lvert T_{ij}\rvert$ is the number of leaves it contains. He showed that solutions obtained from minimizing this objective have many desirable properties. This initiated a line of work on objective driven algorithms for HC, resulting in new algorithms as well as shedding light on the performance of classical methods. 

In particular, two recent (and independent) papers took this objective function viewpoint to understand the performance of Average-Linkage. 
In the first work, \cite*{joshNIPS} introduced a new objective that explicitly favors postponing the cutting of ``heavy'' edges to when the clusters become small, which is in some sense dual to the objective introduced by Dasgupta:
\begin{align}
T^*=\underset{\textrm{all trees}~T}{\mathrm{argmax}}~\sum_{(i,j)\in E}w_{ij}\cdot (n-\lvert T_{ij}\rvert )\label{obj2}\tag{\texttt{HC-OBJ-2}}
\end{align}
In many applications, the geometry in the data is given by dissimilarity scores instead of similarities. In the second work, 
\citealp{vincentSODA} took this view and studied a maximization version of Dasgupta's objective where pairwise weights $w_{ij}$ denote dissimilarities between the endpoints:
\begin{align}
T^*=\underset{\textrm{all trees}~T}{\mathrm{argmax}}~\sum_{(i,j)\in E}w_{ij}\cdot \lvert T_{ij}\rvert \label{obj3}\tag{\texttt{HC-OBJ-3}}
\end{align} 
For maximizing the similarity-based objective, the first work showed that Average-Linkage obtains a $\frac{1}{3}$-approximation. Interestingly, for maximizing the dissimilarity-based objective, the second work also showed that Average-Linkage gives a $\frac{2}{3}$-approximation~\citep{vincentpersonal}. Besides helping with understanding the performance of Average-Linkage, a comprehensive list of desirable properties of the aforementioned objectives can be found in~\cite{dasguptaSTOC,joshNIPS,vincentSODA}, \cite{vincentNIPS}.

\paragraph{Our Contributions.} In this paper, we shed further light on these two hierarchical clustering objectives.
Since both of the objectives were recently introduced in the context of explaining the success of Average-Linkage, and as these objectives are NP-hard to optimize~\citep{dasguptaSTOC}, it is natural to ask how well these objectives can be approximated.
Understanding the approximation factors achievable by other algorithms for these objectives is important in evaluating the explanation for the performance of Average-Linkage by these works.

It turns out that a random solution to both these optimization problems achieves an approximation ratio that matches the approximation guarantees established by the above works for Average-Linkage \citep{joshNIPS,vincentSODA}.
In particular, \citeauthor{joshNIPS} suggest that the performance of Average-Linkage may be strictly better than that of a random solution. Our first set of results is negative:

\begin{displayquote}
\emph{In the worst case, the approximation ratio achieved by Average-Linkage is no better than $\frac{1}{3}$ for the maximization objective of 
\citealp{joshNIPS} and no better than $\frac{2}{3}$ for the maximization objective of 
\citealp{vincentSODA}.}
\end{displayquote}

This raises a natural question: is it possible to achieve an approximation factor strictly better than that achieved by Average-Linkage (also by a random solution) for these two objectives? Or is it the case that these objectives are approximation resistant (i.e. beating the performance of a random solution is provably hard)? Our main result here is positive:

\begin{displayquote}
\emph{ We show simple algorithms that achieve a $\frac{1}{3}+\epsilon$ approximation for the maximization objective of \citealp{joshNIPS} and an algorithm that achieves a $\frac{2}{3}+\delta$ approximation for the maximization objective of~\citep{vincentSODA}, for constants $\epsilon,\delta>0$}
 \end
 {displayquote}
Our algorithms are conceptually very simple;
in the former case, our algorithm is guided by a semidefinite programming (SDP) solution that has a vector representation for the hierarchical clustering for every level of granularity, and uses spreading metric constraints to strengthen the solution. We use the solution at level $n/2$ to make a judicious choice of initial partition, followed by a random solution to refine each piece produced (see \cref{sec:beating-similarity} for details).
In the latter case, our algorithm follows a simple greedy peeling strategy, followed by a max-cut partition (see \cref{sec:beating-dissimilarity} for details). 

In addition to shedding light on the two objectives from the point of view of understanding their approximability, an additional lens with which to view our results is a philosophical one: our results raise the question about whether these objectives are indeed the right way to measure the performance of Average-Linkage clustering.

\paragraph{Further Related Work}
As we mentioned, there is a large body of literature on HC (we refer the reader to~\cite{berkhin2006survey} for a survey) starting with early works in phylogenetics by~\cite{sneath1962numerical,jardine1968model}.  Average-Linkage was one of several methods originating in this field that were subsequently adapted for general-purpose data analysis. Other major applications include image and text classification~\cite{steinbach2000comparison}, community detection in social networks~\cite{leskovec2014mining,mann2008use}, bioinformatics~\cite{diez2015novel}, finance~\cite{tumminello2010correlation} and more. 

Following the formulation of HC as a combinatorial optimization problem, several works explored HC from an approximation algorithms perspective. Dasgupta proposed a top-down approach based on Sparsest Cut and proved that it achieves an $O(\log^{3/2} n)$-approximation that was later improved to $O(\log n)$ by~\cite{royNIPS} and finally to $O(\sqrt{\log n})$~by \cite{vaggosSODA,vincentSODA}. Escaping from worst-case analysis,~\cite{vincentNIPS} study hierarchical extensions of the stochastic block model and show that an older spectral algorithm of~\cite{mcsherry2001spectral} augmented with linkage methods results in an $O(1)$-approximation to Dasgupta's objective.

In another line of work, hierarchical clustering in the context of ``dynamic'' or ``incremental'' clustering, using standard flat-clustering objectives like $k$-means, $k$-median or $k$-center as proxies, has been studied~(\cite{charikar2004incremental,dasgupta2002performance,plaxton2003approximation,lin2010general}).  
Furthermore, there has been recent attention on the ``semi-supervised'' or ``interactive'' versions of HC by \cite{nina1,nina2}, showing that interactive feedback in the form of cluster split/merge requests can lead to significant improvements, and by \cite{dasguptaICML,vaggosICML}, providing techniques for incorporating prior knowledge to get better hierarchical trees.

\section{Notations and Basics}

\paragraph{Notations.} We abuse notation and use $\OPT$ to refer to both the optimum solution and its value for the HC problem at hand. Similarly, $\AL$ denotes both the solution produced by the Average-Linkage algorithm and its objective value. We emphasize that when dealing with similarity weights, the $\AL$ merging criterion is to \textit{maximize} average similarity across the sub-clusters available to the algorithm for merging, whereas when dealing with dissimilarity weights, the $\AL$ merging criterion is to \textit{minimize} average dissimilarity. We term \ref{obj2} as \emph{similarity-HC} and \ref{obj3} as \emph{dissimilarity-HC} in this paper. We use $\tw$ to denote the total weight of the edges in the graph, i.e. $\tw=\sum_{e\in E}w_e$.

\section{Tight Instances for Average Linkage Analysis}
\label{sec:tight-similarity}
Here we describe the constructions of two families of examples proving that 
the known performance bounds for $\AL$ for the two objectives, i.e. \emph{similarity-HC} and \emph{dissimilarity-HC}, are tight.

\begin{observation} [\cite{dasguptaSTOC}]
\label{obs:table-vals}
If the graph is a clique with uniform weights for all of the edges, any clustering tree $T$ obtains exactly the same cost/reward. We occasionally use this fact in this section.
\end{observation}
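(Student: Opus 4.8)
The plan is to reduce the entire statement to a single shape-independence fact about trees. Write $S(T):=\sum_{\{i,j\}}|T_{ij}|$ for the sum over all $\binom{n}{2}$ pairs of leaves. I claim that $S(T)$ depends only on $n$, not on the shape of the binary clustering tree $T$. (I am using the standard convention that a clustering tree is binary; for arbitrary trees the observation is actually false --- already on $3$ leaves not all trees achieve the same cost.) Granting this claim, the observation follows at once: on a clique in which every edge has weight $w$, objective \ref{obj1} evaluates to $w\,S(T)$, objective \ref{obj3} evaluates to $w\,S(T)$, and objective \ref{obj2} evaluates to $w\big(n\binom{n}{2}-S(T)\big)$, so each of the three quantities takes the same value for every $T$.

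To prove the claim I would induct on $n$. The root of $T$ splits the leaf set into the two sets $A$ and $B$ carried by its child subtrees $T_A$ and $T_B$, say with $|A|=a$ and $|B|=b=n-a$. For a pair of leaves both lying in $A$, the lowest common ancestor --- hence the entire subtree $T_{ij}$ --- is contained in $T_A$, so such pairs contribute exactly $S(T_A)$; symmetrically the pairs inside $B$ contribute $S(T_B)$; and each of the $ab$ pairs with one endpoint in $A$ and one in $B$ has its lowest common ancestor at the root and therefore contributes $n$. This yields the recursion $S(T)=S(T_A)+S(T_B)+a\,b\,n$. Substituting the inductive formula for $S(T_A)$ and $S(T_B)$ (which by hypothesis depends only on the sizes $a$ and $b$) and using the identity $a^3+b^3=(a+b)^3-3ab(a+b)=n^3-3abn$ to cancel the cross term, a one-line computation shows the right-hand side collapses to $\tfrac{n(n^2-1)}{3}$, a function of $n$ alone; the base case $n\le 2$ is immediate.

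Since everything is elementary I do not expect a real obstacle: the only two points that need care are (i) recording the binary-tree convention explicitly, as the statement genuinely fails without it, and (ii) checking that the shape-dependent contribution $abn$ of the crossing pairs is exactly compensated by the shape dependence hidden inside $S(T_A)+S(T_B)$ --- this is precisely what the $a^3+b^3$ identity does. As a sanity check one can instead argue without induction: for any three distinct leaves $i,j,k$ in a binary tree, exactly one of the three pairs --- the one whose lowest common ancestor lies strictly below $\mathrm{lca}(i,j,k)$ --- fails to contain the remaining leaf in its subtree, so $\mathbbm{1}[k\in T_{ij}]+\mathbbm{1}[j\in T_{ik}]+\mathbbm{1}[i\in T_{jk}]=2$ irrespective of $T$; writing $|T_{ij}|=2+\sum_{k\neq i,j}\mathbbm{1}[k\in T_{ij}]$, summing over pairs, and regrouping the double sum by triples then gives $S(T)=2\binom{n}{2}+2\binom{n}{3}=\tfrac{n(n^2-1)}{3}$, the same constant.
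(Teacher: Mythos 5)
Your proposal is correct, and it is worth noting that the paper itself offers no proof of this observation at all (it is simply imported from \cite{dasguptaSTOC}), so your argument is a genuinely self-contained verification rather than a variant of an argument in the text. Both of your routes are sound: the recursion $S(T)=S(T_A)+S(T_B)+abn$ with the $a^3+b^3$ cancellation, and the triple-counting identity $\mathbf{1}\{k\in T_{ij}\}+\mathbf{1}\{j\in T_{ik}\}+\mathbf{1}\{i\in T_{jk}\}=2$, which gives $S(T)=2\binom{n}{2}+2\binom{n}{3}=\tfrac{n(n^2-1)}{3}$ for every binary tree; the resulting constants are consistent with how the paper actually invokes the observation (e.g.\ the $\tfrac13 N\binom{N}{2}\cdot w\cdot s$ count for the supernode clique in the proof of \Cref{lem:tight-sim-2}). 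Your binary-tree caveat is also well taken --- a star on three leaves already gives a different value --- and it is harmless here because the trees to which the observation is applied (those built by $\AL$, which merges two clusters at a time) are binary. Finally, your triple-based identity is more than a sanity check: it is exactly the ``for each triple, precisely one of the three pairs excludes the third leaf'' structure that \Cref{sec:beating-similarity} later exploits when decomposing the objective over triples via $\mathbf{1}\{\textrm{$k$ is not a leaf of }T_{ij}\}$, so this second argument dovetails nicely with the rest of the paper.
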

%
%
%
%
%
%
%
%

\subsection{Average-Linkage for similarity-HC is a tight \texorpdfstring{$\frac13$-approximation}{TEXT}}

We will provide a construction where the optimum solution $\OPT$ has value $\approx n\tw$, but where $\AL$ only gets $\tfrac13 n\tw$ (ignoring lower order terms). 
%
The construction does two things:
1) Most of the graph's weight is inside subclusters containing 
$n^{2/3}$ nodes each. So there exists a solution merging almost all the weight in low levels of the hierarchical decomposition, getting 
$\approx 
n\tw$ total value. 
2) 
$\AL$ cuts most of the graph's weight in higher levels of the corresponding tree decomposition so that according to \ref{obj2}, the multiplier of the edges weights is small. 
 \begin{figure}[ht]
  \centering
  \includegraphics[width=0.65\columnwidth]{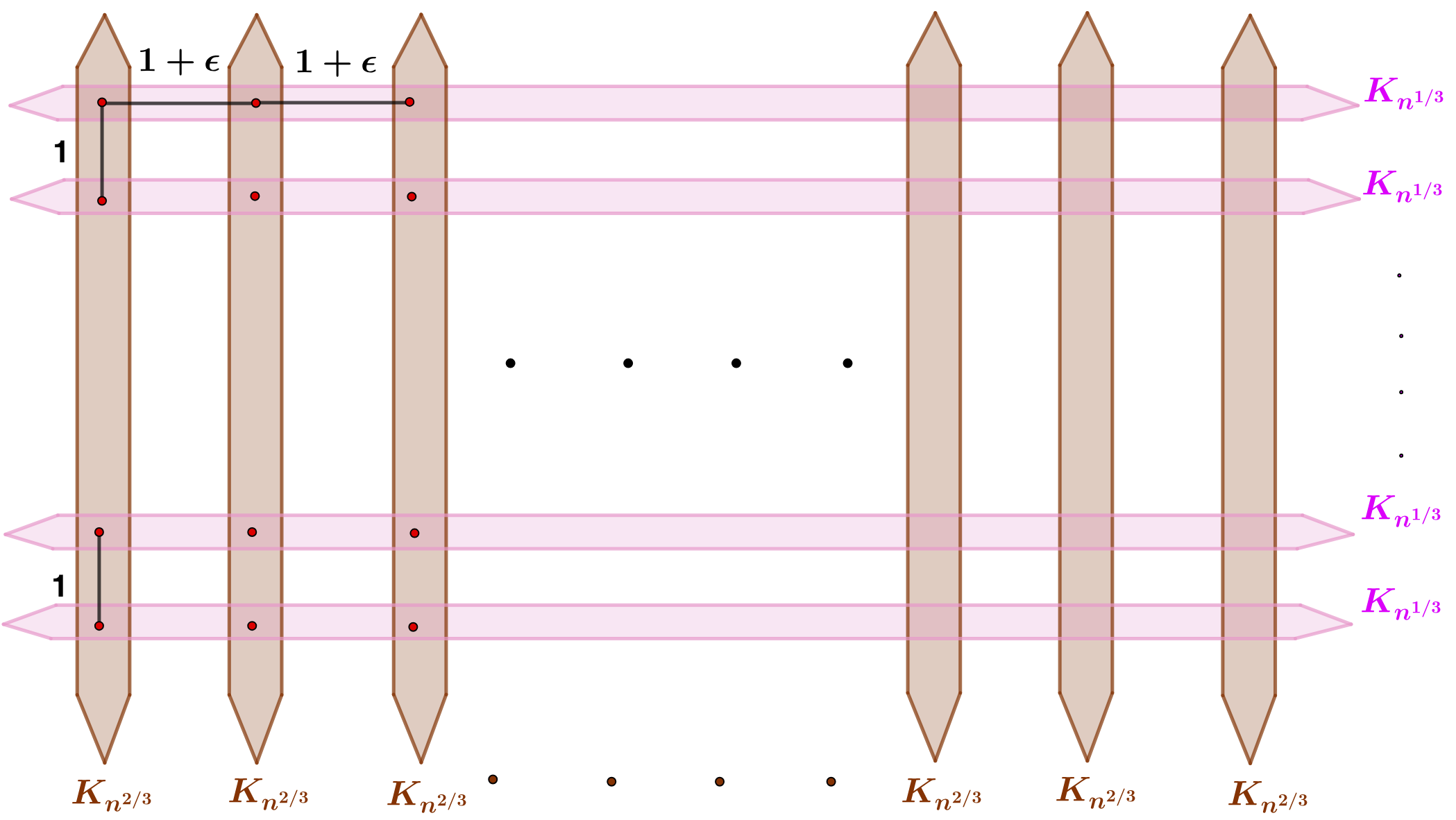}
     \caption{The tight instance where $\AL$ is a tight $\tfrac13$-approximation. The graph has $n$ nodes organized in $n^{1/3}$ vertical groups of $n^{2/3}$ vertices. Each vertical group is a clique $K_{n^{2/3}}$ on $n^{2/3}$ nodes and there are $n^{1/3}$ such groups. Each horizontal group is a clique on $n^{1/3}$ nodes. Every edge in the vertical groups has weight 1, whereas every edge in the horizontal groups has weight $1+\epsilon$.}
     \label{fig:counter1}
\end{figure}

\paragraph{Construction of the tight instance.} To achieve the above, our example (see~\hyperref[fig:counter1]{Figure~\ref{fig:counter1}}) will have $n$ nodes and will contain multiple copies of cliques each of which is either a copy of $K_{n^{1/3}}$ or $K_{n^{2/3}}$. In particular, the tight instance consists of $n^{1/3}$ copies of $K_{n^{2/3}}$ with unit weight. With a slight abuse of notation, we fix an arbitrary ordering $1,2,\dots,n^{2/3}$ for the nodes in \textit{each} $K_{n^{2/3}}$ and refer to them by their corresponding order in each clique. Now we augment this construction by adding all pairwise edges connecting nodes of the \textit{same} order across all the $K_{n^{2/3}}$ cliques. This creates $n^{2/3}$ additional $K_{n^{1/3}}$ cliques and we fix the weight of these additional edges to be $1+\epsilon$ (for any small constant $\epsilon>0$). Note that the total number of nodes is $n^{2/3}\cdot n^{1/3}=n$ and that the total weight of the graph is $\tw=\tfrac{1}{2}n^{2/3}\cdot (n^{2/3}-1)\cdot n^{1/3}\cdot 1+\tfrac{1}{2}n^{1/3}\cdot (n^{1/3}-1)\cdot n^{2/3}\cdot (1+\epsilon) =$ $\tfrac12 n^{5/3}+O(n^{4/3}).$

The following two lemmas compare $\OPT$ to $\AL$ (proofs are deferred to \Cref{appendix:sec3}).
\begin{lemma}
\label{lem:tight-sim-1}
In the above instance, the optimum obtains an objective of at least $ \tfrac12 n^{8/3} - O(n^{7/3}) \approx n\tw$.
\end{lemma}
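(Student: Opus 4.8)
The plan is to exhibit an explicit tree $T$ achieving the claimed value; since $\OPT$ is a maximum, any such construction is a valid lower bound. The natural candidate is the tree that first isolates the $n^{1/3}$ vertical cliques $K_{n^{2/3}}$ from one another at the very top, and only then splits each $K_{n^{2/3}}$ internally. Concretely: the root has $n^{1/3}$ children, the $i$-th being the root of an arbitrary binary tree on the $n^{2/3}$ leaves of the $i$-th vertical clique (we can binarize the high-degree root node without changing the objective, since the objective depends only on subtree-of-LCA sizes). Under this tree, every vertical (weight-$1$) edge has both endpoints inside a single $K_{n^{2/3}}$, so its LCA-subtree has at most $n^{2/3}$ leaves, giving multiplier $n - \lvert T_{ij}\rvert \ge n - n^{2/3}$. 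Every horizontal (weight-$(1+\epsilon)$) edge has its two endpoints in different vertical cliques, so its LCA is the root and $\lvert T_{ij}\rvert = n$, contributing multiplier $0$.

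The second step is just to add up the contributions. The vertical edges number $\tfrac12 n^{1/3}\,n^{2/3}(n^{2/3}-1) = \tfrac12 n^{5/3} - O(n)$, each of weight $1$, each with multiplier at least $n - n^{2/3}$, so the total reward is at least $(\tfrac12 n^{5/3} - O(n))(n - n^{2/3}) = \tfrac12 n^{8/3} - O(n^{7/3})$. The horizontal edges contribute nothing in this tree, which is fine — they are lower-order in total weight anyway ($O(n^{4/3})$ total weight, so at most $O(n^{7/3})$ reward). Comparing with $\tw = \tfrac12 n^{5/3} + O(n^{4/3})$ from the construction, $n\tw = \tfrac12 n^{8/3} + O(n^{7/3})$, so indeed the bound is $\tfrac12 n^{8/3} - O(n^{7/3}) \approx n\tw$, and the trivial upper bound $\OPT \le n\tw$ (every edge has multiplier at most $n - 2 < n$) shows this tree is essentially optimal, which explains the ``$\approx$''.

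There is essentially no obstacle here — the only mild subtlety is whether one is allowed non-binary internal nodes (the objective \ref{obj2} is usually stated for binary trees); this is resolved by the standard observation that any tree can be binarized without decreasing the objective (or, more carefully, that for \ref{obj2} one can binarize the root's children so that each vertical edge's LCA-subtree still has exactly $n^{2/3}$ leaves and each horizontal edge's LCA-subtree has $\le n$ leaves), so the counted contributions only improve. I would also remark that one does not even need the detailed internal structure of each $K_{n^{2/3}}$: since each vertical group is a uniform-weight clique, by \Cref{obs:table-vals} any binary tree on its $n^{2/3}$ leaves gives the same internal reward, so the bound is insensitive to that choice. Thus the whole proof is: describe the two-level tree, invoke binarization / \Cref{obs:table-vals}, count the weight-$1$ edges and their multiplier, and collect lower-order terms.
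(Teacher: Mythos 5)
Your proposal is correct and follows essentially the same route as the paper: exhibit the tree that keeps each vertical clique $K_{n^{2/3}}$ together (the paper phrases this as the bottom-up strategy that merges intra-clique edges first), note each weight-$1$ edge gets multiplier at least $n-n^{2/3}$, and multiply by the $\tfrac12 n^{2/3}(n^{2/3}-1)n^{1/3}$ such edges to get $\tfrac12 n^{8/3}-O(n^{7/3})$. The extra remarks on binarization and on \Cref{obs:table-vals} are fine but not needed beyond what the paper's one-line argument already uses.
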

\begin{lemma}
\label{lem:tight-sim-2}
In the above instance, Average-Linkage gets at most $\tfrac16 n^{8/3} +O(n^{7/3}) \approx \tfrac13 n\tw$.
\end{lemma}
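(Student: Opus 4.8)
The plan is to determine exactly how Average-Linkage behaves on this instance and then read off the value of objective~\ref{obj2}. Write $p=n^{1/3}$ for the number of vertical groups (equivalently the common size of the horizontal groups) and $q=n^{2/3}$ for the number of horizontal groups (equivalently the common size of the vertical groups), so $n=pq$. First I would argue that $\AL$ collapses each horizontal group into a single cluster before doing anything else: a horizontal group is a clique of uniform weight $1+\epsilon$, so by \Cref{obs:table-vals} every merge of two sub-clusters of a single horizontal group has average similarity exactly $1+\epsilon$; on the other hand, two clusters $A,B$ lying in two \emph{different} horizontal groups are joined only by vertical (weight-$1$) edges, and since two nodes of different horizontal groups are adjacent only when they share a vertical group, there are at most $\min(|A|,|B|)$ such edges, so that merge has average similarity at most $1<1+\epsilon$. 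Since in this phase every cluster stays inside a single horizontal group, the maximum available average similarity is $1+\epsilon$ and is attained only at within-group merges, so $\AL$ performs them exhaustively, ending the phase with $q$ clusters of size $p$, one per horizontal group.

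Next I would handle the second phase and evaluate the objective. A direct computation shows that once the horizontal groups are collapsed, for any two unions of horizontal groups $C_i,C_j$ the weight between them equals $p$ times the number of (group, group) pairs across $C_i$ and $C_j$, so every remaining merge has average similarity exactly $1/p=n^{-1/3}$; hence by \Cref{obs:table-vals} $\AL$ may build \emph{any} binary tree $\tau$ on these $q$ super-nodes, and the bound must be established for all of them. Now split the edges into horizontal ($1+\epsilon$) and vertical ($1$). A horizontal edge has both endpoints in one horizontal group, which forms a connected subtree of the dendrogram on $p$ leaves, so $|T_{ij}|\le p$ and the edge contributes at most $(1+\epsilon)n$; there are only $O(n^{4/3})$ horizontal edges, hence $O(n^{7/3})$ in total. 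For the vertical edges, fix an unordered pair of orders $\{k,l\}$: there are exactly $p$ vertical edges joining an order-$k$ node to an order-$l$ node (one per vertical group), and all of them share the same lowest common ancestor, namely the dendrogram node corresponding to $\mathrm{LCA}_\tau(k,l)$, whose subtree has $p\cdot s_{kl}$ leaves, where $s_{kl}$ is the number of horizontal groups below $\mathrm{LCA}_\tau(k,l)$ in $\tau$. So the vertical contribution equals $p\bigl(\binom{q}{2}n-p\,g(\tau)\bigr)$ with $g(\tau):=\sum_{\{k,l\}\subseteq[q]}s_{kl}$, and everything reduces to lower bounding $g(\tau)$.

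The combinatorial core is the inequality $g(\tau)\ge \tfrac13 q^3-O(q^2)$ for every binary tree $\tau$ on $q$ leaves, which I would prove by induction: if the root of $\tau$ has subtrees with $a$ and $b=q-a$ leaves, then grouping leaf-pairs according to whether they split across the root gives $g(\tau)=g(\tau_1)+g(\tau_2)+ab\,q$, and after expanding $\tfrac13(a+b)^3$ the inductive step collapses to $ab\ge 0$. Substituting $g(\tau)\ge\tfrac13 q^3-O(q^2)$ together with $p=n^{1/3}$, $q=n^{2/3}$ bounds the vertical contribution by $\tfrac16 n^{8/3}+O(n^{7/3})$, and adding the $O(n^{7/3})$ from the horizontal edges yields $\AL\le\tfrac16 n^{8/3}+O(n^{7/3})\approx\tfrac13 n\tw$.

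I expect the main obstacle to be the first part: carefully verifying that throughout the first phase no cross-horizontal-group merge can ever tie or beat a within-group merge (so the phase really does collapse every horizontal group) and that the second phase is genuinely ``uniform'' so that \Cref{obs:table-vals} legitimately applies and the estimate must hold for \emph{every} tie-breaking rule of $\AL$. The combinatorial lemma, although indispensable, has essentially a one-line inductive proof once the recursion $g(\tau)=g(\tau_1)+g(\tau_2)+ab\,q$ is written down; the rest is routine bookkeeping of lower-order terms.
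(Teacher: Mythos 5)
Your proof is correct and follows the same two-phase structure as the paper's: first show that $\AL$ exhaustively merges the weight-$(1+\epsilon)$ horizontal cliques (whose total contribution is only $O(n^{7/3})$), then observe that what remains is a uniform-weight clique on $n^{2/3}$ supernodes of size $n^{1/3}$ and that its contribution is $\tfrac16 n^{8/3}+O(n^{7/3})$. The one place you diverge is in how the second phase is evaluated: the paper simply invokes \Cref{obs:table-vals} together with the known $\tfrac13$-factor formula for uniform cliques, whereas you re-derive the value from scratch via the quantity $g(\tau)=\sum_{\{k,l\}}s_{kl}$ and the recursion $g(\tau)=g(\tau_1)+g(\tau_2)+abq$, proving a bound valid for \emph{every} binary tree on the supernodes. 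This is more self-contained and has the advantage of explicitly covering all tie-breaking rules of $\AL$ in the second phase (where all merges have identical average similarity $n^{-1/3}$); in fact your recursion gives equality $g(\tau)=\tfrac13(q^3-q)$ for every tree, which is exactly the content of \Cref{obs:table-vals} in this setting. Your justification of the first phase (within-group merges have average exactly $1+\epsilon$ while cross-group merges have average at most $1/\max(|A|,|B|)\le 1$) is also spelled out more carefully than in the paper, which merely asserts this behavior.
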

Combining these lemmas, we settle a open question raised in~\cite{joshNIPS}:

\begin{proposition}
There exists an instance for which Average-Linkage is a $\tfrac13 +o(1)$-approximation for the similarity-HC objective~(\ref{obj2}) introduced in \cite{joshNIPS}.
\end{proposition}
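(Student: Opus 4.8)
The plan is to deduce the proposition immediately from \Cref{lem:tight-sim-1} and \Cref{lem:tight-sim-2} applied to the instance constructed above. Concretely, I would fix the $n$-node graph described in the construction --- the $n^{1/3}$ vertex-disjoint unit-weight cliques $K_{n^{2/3}}$ together with the $n^{2/3}$ ``same-order'' cliques $K_{n^{1/3}}$ of weight $1+\epsilon$ --- and recall that $\tw=\tfrac12 n^{5/3}+O(n^{4/3})$, so that $n\tw=\tfrac12 n^{8/3}+O(n^{7/3})$. By \Cref{lem:tight-sim-1} the optimum tree has value $\OPT\ge \tfrac12 n^{8/3}-O(n^{7/3})$, while by \Cref{lem:tight-sim-2} Average-Linkage produces a tree of value $\AL\le \tfrac16 n^{8/3}+O(n^{7/3})$.

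Dividing these two bounds gives the approximation ratio on this family of instances:
\begin{align*}
\frac{\AL}{\OPT}\ \le\ \frac{\tfrac16 n^{8/3}+O(n^{7/3})}{\tfrac12 n^{8/3}-O(n^{7/3})}\ =\ \frac13\cdot\frac{1+O(n^{-1/3})}{1-O(n^{-1/3})}\ =\ \frac13+O(n^{-1/3}).
\end{align*}
Since $O(n^{-1/3})=o(1)$, reading the construction as a family of instances indexed by $n$ (for any fixed small constant $\epsilon>0$) shows that Average-Linkage achieves only a $\tfrac13+o(1)$ approximation for \ref{obj2}, which is exactly the claimed statement; this rules out the possibility, suggested in \cite{joshNIPS}, that Average-Linkage provably beats a random solution, and hence settles that open question.

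All the substance sits in the two lemmas, and I expect their proofs (not the combination above) to be the real work. For \Cref{lem:tight-sim-1} the idea is to exhibit one good tree: split the root into the $n^{1/3}$ vertical cliques and recurse arbitrarily inside each (by \Cref{obs:table-vals} the choice inside a clique is irrelevant). Then every unit-weight edge lies in a subtree of size at most $n^{2/3}$, so its \ref{obj2} multiplier $n-\abs{T_{ij}}$ is at least $n-n^{2/3}$; multiplying by the $\approx\tfrac12 n^{5/3}$ unit edges already yields $\tfrac12 n^{8/3}-O(n^{7/3})$, and the remaining $O(n^{4/3})$ weight on the horizontal cliques only adds to this. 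For \Cref{lem:tight-sim-2} the plan is to follow Average-Linkage step by step: because it merges so as to maximize average similarity it is lured by the slightly heavier weight-$(1+\epsilon)$ horizontal edges and agglomerates across the heavy $K_{n^{2/3}}$ cliques too early, which forces the unit-weight mass --- which dominates $\tw$ --- to be cut only inside large clusters, where $n-\abs{T_{ij}}$ is small. The delicate part, and the main obstacle, is making this quantitative: tracking the sizes of the clusters in which the dominant unit weight is separated and showing the resulting total is no more than $\tfrac16 n^{8/3}+O(n^{7/3})$. Once that estimate is in hand, the ratio computation above closes the argument.
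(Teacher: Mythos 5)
Your proposal is correct and matches the paper's own proof: the proposition is obtained exactly by combining \Cref{lem:tight-sim-1} and \Cref{lem:tight-sim-2} on the constructed instance and taking the ratio, which is $\tfrac13+O(n^{-1/3})=\tfrac13+o(1)$. Your sketches of the two lemmas also track the paper's appendix arguments (the paper handles the Average-Linkage bound by noting that after the first phase the instance collapses to a uniform-weight clique of supernodes, so \Cref{obs:table-vals} gives the value directly rather than requiring delicate tracking), but that is beyond what the proposition's proof itself needs.
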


\subsection{Average-Linkage for dissimilarity-HC is a tight \texorpdfstring{$\frac23$-approximation}{TEXT}}
\label{sec:dissimilarity-lowerbound}
When the pairwise scores denote dissimilarities, we focus on \ref{obj3}. \cite{vincentSODA} showed that running $\AL$ gives a $\tfrac12$-approximation, and a slight modification of their proof~\citep{vincentpersonal} gives an improved $\tfrac{2}{3}$-approximation bound. Here we show that the $\tfrac23$ ratio is actually tight.

\paragraph{Construction of the tight instance.} Let the number of nodes $n$ be even. We start with the complete bipartite graph $K_{n/2,n/2}$ with unit weights (let $L,R$ denote the two sides of the graph). We then remove any perfect matching $M$ crossing the $(L,R)$ cut (see~\hyperref[fig:counter2]{Figure~\ref{fig:counter2}}). Note that the total weight of the edges is $\tw=\tfrac n2 \cdot \tfrac n2 -\tfrac n2 \approx \tfrac14 n^2 $.
\vspace{3mm}

The following two lemmas compare $\OPT$ to $\AL$ (proofs are deferred to \Cref{appendix:sec3}).

\begin{figure}[ht]
  \centering
  \includegraphics[width=0.65\columnwidth]{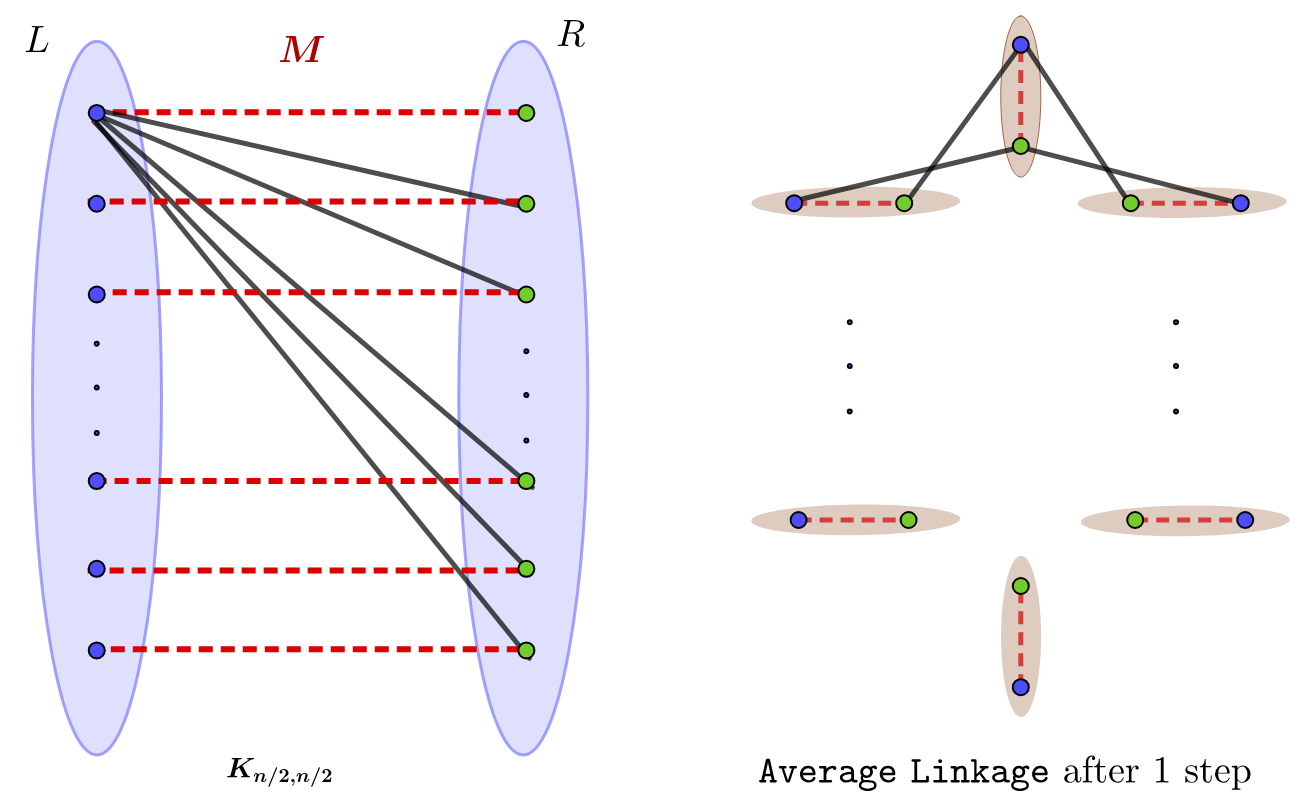}
     \caption{The tight instance where $\AL$ is a tight $\tfrac23$-approximation. The graph is a complete bipartite graph where we removed a perfect matching $M$ denoted with the red dashed edges. After one step of $\AL$, the instance is a clique on $\tfrac n2$ supernodes of size 2 with doubled edge weights.}
     \label{fig:counter2}
\end{figure}
\begin{lemma}
\label{lem:tight-disim-1}
In the above instance, the optimum HC decomposition obtains an objective value of at least $\tfrac14 n^{3} - O(n^2)\approx n\tw$.
\end{lemma}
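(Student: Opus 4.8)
The plan is to exhibit a single hierarchical clustering tree $T$ whose objective value already matches the trivial upper bound $n\tw$ up to lower-order terms; since $\OPT$ is at least the value of any particular tree, this establishes the lemma. First recall that for every tree and every edge $(i,j)\in E$ we have $\lvert T_{ij}\rvert\le n$, so $\OPT\le n\tw$ unconditionally; it therefore suffices to construct one tree attaining value $n\tw-O(n^2)$.

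Take $T$ to be any tree whose root has exactly two children, the subtrees below them spanning precisely $L$ and $R$ (the two sides of the bipartition); the internal structure inside the $L$-subtree and the $R$-subtree is irrelevant, so one may pick, say, balanced binary trees there. The key observation is that by construction every surviving edge of the graph is an $(L,R)$ edge: we started from the complete bipartite graph $K_{n/2,n/2}$, all of whose edges cross the $(L,R)$ cut, and deleting a perfect matching $M$ (which also crosses that cut) preserves this property. Hence for every $(i,j)\in E$ the two endpoints lie on opposite sides of the top-level split, so their lowest common ancestor is the root of $T$ and $\lvert T_{ij}\rvert=n$.

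Consequently the objective value of $T$ equals $\sum_{(i,j)\in E} w_{ij}\cdot n = n\tw = n\bigl(\tfrac{n^2}{4}-\tfrac n2\bigr)=\tfrac14 n^3-\tfrac12 n^2$, which gives $\OPT\ge \tfrac14 n^3-O(n^2)$ as claimed (and, combined with $\OPT\le n\tw$, this is in fact exactly optimal). There is essentially no obstacle beyond verifying the two elementary facts used above: (i) every edge that survives the matching deletion still crosses the $(L,R)$ cut, and (ii) an edge whose lowest common ancestor is the root contributes the maximal multiplier $n$ to \ref{obj3}; both are immediate from the definitions.
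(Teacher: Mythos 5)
Your proposal is correct and matches the paper's argument: the paper likewise lower-bounds $\OPT$ by the tree whose top split is $(L,R)$, noting every edge crosses that cut and thus receives the maximal multiplier $n$, yielding value $n\tw=\tfrac14 n^3-O(n^2)$. Your write-up just spells out the same two elementary facts in more detail.
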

\begin{lemma}
\label{lem:tight-disim-2}
In the above instance, Average-Linkage gets at most  $\tfrac16 n^3 \approx \tfrac23 n\tw$.
\end{lemma}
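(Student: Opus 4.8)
The plan is to follow the execution of $\AL$ through its first $n/2$ merges, show that the instance then collapses to a uniformly weighted clique on $n/2$ supernodes, and read off the objective with \Cref{obs:table-vals}, so that nothing needs to be said about how $\AL$ completes the tree.

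Write $L=\{l_1,\dots,l_{n/2}\}$, $R=\{r_1,\dots,r_{n/2}\}$, and let $M=\{(l_k,r_k)\}$ be the deleted matching. In the dissimilarity setting the smallest possible average weight between two clusters is $0$, and merging a matched pair $\{l_k,r_k\}$ attains it since $(l_k,r_k)\notin E$; since after fewer than $n/2$ such merges at least one matched pair remains untouched and its two (non-adjacent) endpoints still have average weight $0$, there is a legal run of $\AL$ whose first $n/2$ steps create the supernodes $S_k=\{l_k,r_k\}$ --- this is the run depicted in Figure~\ref{fig:counter2}, and exhibiting one such run suffices for a worst-case lower bound. For $i\neq j$ the only edges between $S_i$ and $S_j$ are $(l_i,r_j)$ and $(l_j,r_i)$, each of weight $1$, so every pair of supernodes has average weight $\tfrac12$, i.e. the quotient instance is a uniformly weighted clique $K_{n/2}$. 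By \Cref{obs:table-vals} applied to this quotient clique, the contribution of all inter-supernode edges to the \ref{obj3} objective is the same regardless of the tree $\AL$ builds on top of the supernodes.

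It remains to evaluate that contribution, which is in fact the whole objective: the deletion of $M$ removes every intra-supernode edge and bipartiteness removes every intra-$L$ and intra-$R$ edge, so each edge joins two distinct supernodes. Grouping edges by the supernode pair they join, observing that each such pair carries exactly two unit-weight edges and that the subtree rooted at $\mathrm{LCA}(S_i,S_j)$ contains exactly twice as many original leaves as supernodes, the objective equals $4$ times the dissimilarity-HC value of the abstract unit-weight clique $K_{n/2}$ in which each supernode is a single leaf. A direct computation --- e.g. summing $a(v)\,b(v)\,(a(v)+b(v))$ over the internal nodes $v$ of a caterpillar tree, where $a(v),b(v)$ are the two child subtree sizes --- gives that the dissimilarity-HC value of $K_m$ with unit weights is $\tfrac13 m(m^2-1)$. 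Hence
\[
\AL \;=\; 4\cdot\tfrac13\cdot\tfrac n2\Bigl(\tfrac{n^2}{4}-1\Bigr)\;=\;\tfrac16 n^3-O(n)\;\le\;\tfrac16 n^3,
\]
and since $\tw\approx\tfrac14 n^2$ this is $\tfrac23 n\tw$, as claimed.

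The main obstacle is the very first step: $\AL$ is pinned down only up to tie-breaking among the many average-weight-$0$ merges available initially, and a different tie-breaking --- greedily coalescing all of $L$ into one cluster and then all of $R$ --- would in fact make $\AL$ optimal (value $n\tw$). So the statement must be read as asserting that some execution of $\AL$, equivalently worst-case tie-breaking, attains $\tfrac16 n^3$; this is exactly what is needed to conclude, together with the known lower bound on $\AL$ for \ref{obj3}, that its $\tfrac23$ guarantee is tight. Everything past the reduction to the uniform clique --- the factor-$4$ leaf bookkeeping and the closed form for $K_m$ --- is routine.
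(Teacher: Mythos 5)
Your proof is correct and follows essentially the same route as the paper: let $\AL$ merge the matched (zero-weight) pairs first, observe that the quotient instance is a uniform clique on $n/2$ supernodes of size $2$ with pairwise weight $2$, invoke \Cref{obs:table-vals}, and evaluate the objective to $\tfrac16 n^3 - O(n)$. The only difference is how the tie-breaking caveat you raise is handled: the paper forces this execution by perturbing the other zero-weight (intra-$L$, intra-$R$) pairs with small $\epsilon>0$ weights, so the bound does not rely on adversarial tie-breaking, whereas you read the lemma as a statement about the worst-case run of $\AL$ --- both readings suffice for tightness.
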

Combining these lemmas, we get the following result:
\begin{proposition}
There exists an instance for which Average-Linkage is a $\tfrac23 +o(1)$-approximation for the dissimilarity-HC objective~(\ref{obj3}), introduced by \cite{vincentSODA}.
\end{proposition}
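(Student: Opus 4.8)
The proposition is immediate from the two lemmas. On the instance $K_{n/2,n/2}\setminus M$ we have $\OPT\ge\tfrac14 n^3-O(n^2)$ by \Cref{lem:tight-disim-1} and $\AL\le\tfrac16 n^3$ by \Cref{lem:tight-disim-2}, so
\[
\frac{\AL}{\OPT}\;\le\;\frac{\tfrac16 n^3}{\tfrac14 n^3-O(n^2)}\;=\;\frac{1/6}{1/4-O(1/n)}\;=\;\frac23\bigl(1+O(1/n)\bigr)\;=\;\frac23+o(1),
\]
which is exactly the claim for \ref{obj3}. Thus the content lies entirely in the two lemmas, and below I sketch how I would prove them.

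For \Cref{lem:tight-disim-1} I would exhibit a single near-optimal tree: take the top split to be $L\mid R$ and complete each side into a binary tree arbitrarily. Deleting a perfect matching from $K_{n/2,n/2}$ leaves no edge inside $L$ or inside $R$, so every edge of the graph crosses this cut, its lowest common ancestor is the root, and $|T_{ij}|=n$. Hence the reward equals $n\tw=n\bigl(\tfrac{n^2}{4}-\tfrac n2\bigr)=\tfrac14 n^3-O(n^2)$; and since $|T_{ij}|\le n$ in every tree, $n\tw$ is also a universal upper bound, so in fact $\OPT=\tfrac14 n^3-O(n^2)\approx n\tw$.

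For \Cref{lem:tight-disim-2} the plan is to follow the run of Average-Linkage. The pairs of minimum average dissimilarity are those of weight $0$, namely the matched pairs of $M$ together with the same-side pairs; tracking the execution that contracts matched pairs first (the behaviour depicted in \Cref{fig:counter2}, enforceable by an infinitesimal perturbation of the matched-pair weights), $\AL$ first forms $n/2$ supernodes of size $2$, after which it faces a clique on these supernodes with total weight exactly $2$ between every pair and $0$ inside each. By \Cref{obs:table-vals} the reward collected from then on does not depend on how the remaining tree is built, so I would evaluate it on the caterpillar $S_1,\dots,S_{n/2}$ (split off $S_{n/2}$ first, then $S_{n/2-1}$, and so on): an original edge between $S_a$ and $S_b$ with $a<b$ has $|T_{ij}|=2b$, and summing $2\cdot 2b$ over all pairs $a<b$ gives $\sum_{b=2}^{n/2}4b(b-1)=\tfrac43\cdot\tfrac n2\bigl(\tfrac{n^2}{4}-1\bigr)=\tfrac16 n^3-O(n)$; the within-pair merges contribute $0$, so $\AL=\tfrac16 n^3-O(n)\le\tfrac16 n^3\approx\tfrac23 n\tw$.

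The main obstacle is the tie-breaking in \Cref{lem:tight-disim-2}: with exact $0/1$ weights, merging two same-side vertices is as cheap as merging a matched pair, and had Average-Linkage instead kept merging within $L$ and within $R$ it would have built the $L\mid R$ tree and matched $\OPT$; so I must pin down that the relevant run is the matched-pair one, after which \Cref{obs:table-vals} turns the rest into a short computation. A minor point to state carefully is that \Cref{obs:table-vals} is phrased for cliques on leaves, so I should note it still applies after the contraction because each supernode contributes a fixed block of two leaves to every lowest-common-ancestor count, making the reward a constant multiple of the Observation's invariant for the $(n/2)$-vertex clique.
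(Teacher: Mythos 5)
Your proposal is correct and follows essentially the same route as the paper: combine the two lemmas; prove \Cref{lem:tight-disim-1} by taking the top cut $(L,R)$ so that every surviving edge gets multiplier $n$ and the value is $n\tw$; and prove \Cref{lem:tight-disim-2} by letting Average-Linkage contract the matching $M$ first (the paper resolves the same tie-breaking issue by a small weight perturbation) and then invoking \Cref{obs:table-vals} on the resulting uniform clique of $n/2$ supernodes of size $2$ with weight $2$, which yields $\approx \tfrac16 n^3$. One small wording fix: to enforce the matched-pair-first run you should perturb the \emph{same-side} zero weights upward (the matched-pair weights are already $0$ and cannot be lowered), which is what the paper's ``small edge weights $\epsilon>0$'' amounts to.
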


\newcommand{\Trand}{T^{(1)}}
\newcommand{\Tsdp}{T^{(2)}}
\section{Beating Average-Linkage for Similarity-HC}
\label{sec:beating-similarity}
In this section, we aim to design approximation algorithms for the similarity-based hierarchical clustering, i.e. creating a hierarchical decomposition that approximately maximizes the objective function \ref{obj2}. To this end, we formulate a semidefinite programming (SDP) relaxation for the problem. We show that using a well chosen set of vectors returned by this SDP and a simple hyperplane rounding scheme, we can beat the average-linkage which is a tight $\tfrac13$-approximation algorithm. The main challenge in the analysis of the rounding scheme is in lower bounding the probability of certain events related to the triplets of vertices and the order in which they get separated; we do this by exploiting the specific geometry of the vectors in the SDP optimal solution, and with the help of our imposed spreading metric constraints in this SDP relaxation.
\subsection{SDP relaxation for similarity-HC}


Suppose $n$ datapoints with similarity weights $\{w_{ij}\}$ are given. An alternative view of a hierarchical clustering of these points is a collection of partitions of the points at different levels $t=n-1,\ldots,1$, where the partition at level $t$ consists of all the maximal clusters of size at most $t$. Given this view, we can rewrite the similarity-HC objective function (\ref{obj2}) as following.
\begin{equation*}
\sum_{(i,j)\in E}w_{ij}(n- \lvert T_{ij}\rvert)= \sum_{t=1}^{n-1}\sum_{(i,j)\in E}w_{ij}\cdot \mathbf{1}\{\textrm{$i$ and $j$ are not separated at level $t$'s partitioning}\}
\end{equation*}
Now, given the optimal hierarchical clustering, consider a vector assignment where at every level $t=1,..,n-1$ the same unit vectors are assigned to all the nodes in the same maximal cluster, while the assigned vectors to different clusters are chosen to be orthogonal. Let $\{\vit{i}{t}\}$ be the set of assigned vectors. Clearly, the contribution of an edge $w_{ij}$ at level $t$ can be alternatively written as $w_{ij}(\vit{i}{t}\cdot\vit{j}{t})$. This observation suggests a relaxation through semidefinite programming/vector programming.
\begin{proposition}
The following SDP is a relaxation for the similarity-HC problem~(\ref{obj2}).
\begin{equation}
\tag{HC-SDP}
\label{eq:SDP-HC}
\begin{aligned}
& \underset{}{\text{maximize}}
& & \sum_{t=1}^{n-1}\sum_{(i,j)\in E}w_{ij}(1-\xijt{t}) \\
& \text{subject to}
& & \xijt{t}=\frac{1}{2}\norms{\vit{i}{t}-\vit{j}{t}}, & \forall(i,j) \in E, t\in[1:n-1]\\
& 
& & \sum_{j\in V: j\neq i}\xijt{t}\geq n-t, &\forall i\in V ,t\in[1:n-1]&~~~~~~(\textit{spreading constraints})\\
&
& & \xijt{t+1}\leq \xijt{t},~~\xijt{1}=1& \forall(i,j)\in E,t\in[1:n-1]&~~~~~~(\textit{monotonicity constraints})\\
&
& & \vit{i}{t}\in\mathbb{R}^n,~~\norms{\vit{i}{t}}=1, & \forall i\in V,t\in[1:n-1]
\end{aligned}
\end{equation}
\end{proposition}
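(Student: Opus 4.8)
The plan is to verify that \eqref{eq:SDP-HC} is a valid relaxation by exhibiting, for any hierarchical clustering tree $T$, a feasible SDP solution whose objective equals the similarity-HC value of $T$; this immediately implies $\SDP \geq \OPT$ and hence that the SDP upper-bounds the optimum. First I would make precise the ``level-$t$ partition'' $\mathcal{P}_t(T)$ associated to $T$: for each $t \in [1:n-1]$, $\mathcal{P}_t(T)$ is the collection of maximal clusters (subtrees' leaf-sets) of $T$ having size at most $t$. The key structural facts to record are that $\mathcal{P}_t(T)$ is indeed a partition of $V$ (every leaf lies in a unique maximal cluster of size $\le t$, since the singleton containing it has size $1 \le t$ and cluster sizes increase along root-ward paths), and that $\mathcal{P}_{t+1}(T)$ is a coarsening of $\mathcal{P}_t(T)$ (a maximal cluster of size $\le t$ is contained in a maximal cluster of size $\le t+1$). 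I would also note that $i$ and $j$ are in the same part of $\mathcal{P}_t(T)$ exactly when $\lvert T_{ij}\rvert \le t$, i.e.\ when $i,j$ are ``not separated at level $t$,'' which is precisely the indicator in the rewritten objective.

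Next I would define the vectors. Fix an arbitrary assignment of mutually orthogonal unit vectors in $\mathbb{R}^n$ to the parts: for each level $t$ and each part $C \in \mathcal{P}_t(T)$ pick a unit vector $\mathbf{u}_C$ so that distinct parts at the same level get orthogonal vectors (possible since $\lvert \mathcal{P}_t(T)\rvert \le n$). Set $\vit{i}{t} = \mathbf{u}_{C}$ where $C$ is the part of $\mathcal{P}_t(T)$ containing $i$. Then $\vit{i}{t}\cdot\vit{j}{t}$ is $1$ if $i,j$ share a part at level $t$ and $0$ otherwise, so with $\xijt{t} = \tfrac12\norms{\vit{i}{t}-\vit{j}{t}} = 1 - \vit{i}{t}\cdot\vit{j}{t}$ we get $1 - \xijt{t} = \vit{i}{t}\cdot\vit{j}{t} = \mathbf{1}\{i,j \text{ not separated at level } t\}$. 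Summing over $t$ and $(i,j)\in E$ and invoking the objective rewriting displayed just before the proposition shows the SDP objective of this solution equals $\sum_{(i,j)\in E} w_{ij}(n - \lvert T_{ij}\rvert)$, the similarity-HC value of $T$.

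It then remains to check feasibility of this candidate solution: the norm constraints $\norms{\vit{i}{t}}=1$ hold by construction; $\xijt{1}=1$ holds because at level $1$ every part is a singleton so $\vit{i}{1}\perp\vit{j}{1}$ for $i\ne j$; the monotonicity constraint $\xijt{t+1}\le\xijt{t}$ follows from $\mathcal{P}_{t+1}$ refining to $\mathcal{P}_t$ upward, i.e.\ if $i,j$ are together at level $t$ they are together at level $t+1$, so the inner product can only increase and $\xijt{t}$ can only decrease; and the spreading constraint $\sum_{j\ne i}\xijt{t}\ge n-t$ holds because $i$'s part $C$ at level $t$ has $\lvert C\rvert \le t$, so at least $n-t$ vertices $j$ lie outside $C$, each contributing $\xijt{t}=1$ to the sum. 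The one place that needs a little care — and the only real obstacle — is the spreading constraint: one must confirm $\lvert C\rvert \le t$ (true by the definition of $\mathcal{P}_t(T)$ as clusters of size \emph{at most} $t$) and that the sum in the constraint ranges over all $j \in V$, not merely neighbors in $E$, so that the $n - t$ out-of-cluster vertices all count regardless of edge presence; both are consistent with the stated SDP. Putting these together, the candidate is feasible with objective equal to the value of $T$; taking $T$ to be an optimal tree gives $\SDP \ge \OPT$, establishing that \eqref{eq:SDP-HC} is a relaxation.
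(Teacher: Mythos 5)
Your proposal is correct and follows essentially the same route as the paper, which (without a formal proof) sketches exactly this construction: view the tree as its level-$t$ partitions into maximal clusters of size at most $t$, assign orthogonal unit vectors to distinct clusters so that $1-\xijt{t}$ becomes the indicator that $i,j$ are not yet separated, and observe that the spreading and monotonicity constraints hold for such integral assignments. Your write-up merely makes explicit the feasibility checks (partition/coarsening structure, $\xijt{1}=1$, and the $\lvert C\rvert\le t$ argument for spreading) that the paper leaves implicit.
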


In an integral HC decomposition, each node in a maximal cluster of size at most $t$ has been separated from at least $n-t$ vertices at level $t$ (i.e. all of the nodes outside of this cluster). We therefore add the constraints $\sum_{j}\xijt{t}\geq n-t$, termed as the \textit{spreading constraints}. Intuitively, they force the SDP to choose vectors that are somewhat separated, thus preventing it from cheating by assigning identical vectors. Finally, \emph{monotonocity constraints} ensure monotonicty of the separation from top to bottom.

\subsection{Combining SDP rounding and random to beat average-linkage}
Suppose $\SDP$ and $\OPT$ denote the optimum solution of the SDP relaxation (\ref{eq:SDP-HC}) and the optimum integral solution of the similarity-HC objective (\ref{obj2}) respectively. Our goal is to beat the $\tfrac13$ approximation ratio attained by $\AL$. We will consider two simple algorithms for the HC problem. The first algorithm, ``\emph{random always}'', cuts each cluster recursively and uniformly at random until it reaches to singletons. The second algorithm, ``\emph{SDP first, random next}'', uses the semidefinite program \ref{eq:SDP-HC} and hyperplane rounding for determining the first cut, and then it picks a random cut for each of the later clusters until it reaches to singletons. 

As it is also known~(\cite{joshNIPS}), recursively performing random cuts will yield an HC solution with expected value exactly equal to $\tfrac13(n-2)\tw$, where $\tw\triangleq \sum_{(i,j)\in E}w_{ij}$. An initial idea is that when there is a gap between $\OPT$ and the quantity $(n-2)\tw$, i.e. when $\OPT < (1-\epsilon_1)(n-2)\tw$ for some small constant $\epsilon_1>0$, then ``\emph{random always}'' already attains an approximation guarantee of $\tfrac{1}{3(1-\epsilon_1)}>\tfrac13$. The name of the game is then to come up with a good approximation algorithm in the case where $\OPT$ is actually pretty large, close to $(n-2)\tw$. Interestingly, a suitable initial cut can be found by exploiting the SDP relaxation in this case, which can be then used to guide the ``\emph{random always}'' algorithm.

\begin{algorithm}[h]
\caption{Random Always}
\label{alg:Random}
\begin{algorithmic}[1]
\State \textbf{input}: $G=(V,E)$.
\If {$\lvert V \rvert=1$} 
\State Return the singleton vertex as the only cluster.
\Else
\State Randomly partition the set of vertices into $S$ and $\bar{S}$. 
\State Recursively run ``\emph{random always}'' on $G_S$ and $G_{\bar{S}}$ to get clusters $\mathcal{C}_S$ and $\mathcal{C}_{\bar{S}}$.
\State Return the clusters $S$,$\bar{S}$, $\mathcal{C}_S$ and $\mathcal{C}_{\bar{S}}$. 
\EndIf
\end{algorithmic}
\end{algorithm}
\begin{algorithm}[h]
\caption{SDP First, Random Next}
\label{alg:SDP-Random}
\begin{algorithmic}[1]
\State \textbf{input:} $G=(V,E)$ and (similarity) weights $\{w_{ij}\}_{(i,j)\in E}$.
\State Solve the SDP relaxation \ref{eq:SDP-HC} to get an optimum assignment $\{x^t_{ij}\}_{(i,j)\in E,~t=1,\ldots,n-1}$. 
\State Let $x^*_{ij}=\xijt{\floor{n/2}-1}$ and $\mathbf{v}^*_i=\vit{i}{\floor{n/2}-1}$ be the optimal solution restricted to level $t=\floor{n/2}-1$.
\State Draw $\mathbf{v}_0$ uniformly at random from unit sphere, and let $S=\{i\in V: \mathbf{v}^*_i\cdot \mathbf{v}_0\geq 0\}$.
\State Partition the vertices into $S$ and $\bar{S}=V\setminus S$. 
\State Run ``\emph{random always}'' (\Cref{alg:Random}) on $S$ and $\bar{S}$ to get clusters $\mathcal{C}_S$ and $\mathcal{C}_{\bar{S}}$.
\State Return the clusters $S$,$\bar{S}$, $\mathcal{C}_S$ and $\mathcal{C}_{\bar{S}}$. 
\end{algorithmic}
\end{algorithm}

\begin{theorem}
\label{thm:similarity}
The best of the ``\emph{SDP first, random next}'' (\Cref{alg:SDP-Random}) and ``\emph{random always}'' (\Cref{alg:Random}) is a randomized $\alpha$-approximation algorithm for maximizing the similarity-HC objective for hierarchical clustering, where $\alpha=0.336379>\tfrac{1}{3}.$
\end{theorem}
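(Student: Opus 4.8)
The plan is to obtain the guarantee from a dichotomy based on how close $\OPT$ is to the trivial upper bound $(n-2)\tw$ (valid because $\lvert T_{ij}\rvert\ge 2$ for every edge). Since ``random always'' (\Cref{alg:Random}) has expected value exactly $\tfrac13(n-2)\tw$ on every instance, it achieves approximation ratio at least $\tfrac1{3\gamma}$ whenever $\OPT\le\gamma(n-2)\tw$; fixing a threshold $\gamma^\star<1$, the case $\OPT\le\gamma^\star(n-2)\tw$ is thus already handled by \Cref{alg:Random}. It then remains to treat the complementary regime, $\OPT>\gamma^\star(n-2)\tw$ — so also $\SDP\ge\OPT>\gamma^\star(n-2)\tw$ — by showing that ``SDP first, random next'' (\Cref{alg:SDP-Random}) beats $\tfrac13(n-2)\tw$ by a constant multiple of $n\tw$ there. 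Since the best of the two algorithms has expected value at least the maximum of the two expectations, it suffices to show that in each case one of the two achieves expected value at least $\alpha\,\OPT$, and the value $\alpha=0.336379$ will come out of optimizing $\gamma^\star$ against the size of the improvement in the hard case.

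To analyze \Cref{alg:SDP-Random} in the hard case I would rewrite its expected objective over vertex triples. Writing the objective as $\sum_{(i,j)\in E}w_{ij}\cdot\#\{k:\ k\text{ is split off from }\{i,j\}\text{ strictly before }i,j\text{ are split}\}$, and letting $E_{ijk}$ denote that event, linearity of expectation gives $\Ex{\OBJ}=\sum_{(i,j)\in E}w_{ij}\sum_{k\neq i,j}\Pr{E_{ijk}}$. For a fixed triple, I would condition on the single SDP-guided hyperplane cut applied to the level-$(\lfloor n/2\rfloor-1)$ vectors $\{\mathbf{v}^{*}_i\}$: if that cut already isolates one of $i,j,k$, the event $E_{ijk}$ is decided deterministically; otherwise all three stay together and the subsequent random cuts give $\Pr{E_{ijk}}=\tfrac13$ by symmetry. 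Combining this with the elementary fact that a uniformly random hyperplane through the origin isolates a prescribed one of three unit vectors with probability $\tfrac1{2\pi}(\theta_{uv}+\theta_{uw}-\theta_{vw})$ — which follows from the two-vector identity $\theta/\pi$ for the separation probability together with simple casework on which vector (if any) the hyperplane isolates — yields the clean identity
\[
\Pr{E_{ijk}}=\tfrac13+\tfrac1{3\pi}\bigl(\theta_{ik}+\theta_{jk}-2\theta_{ij}\bigr),
\]
where $\theta_{ab}$ is the angle between $\mathbf{v}^{*}_a$ and $\mathbf{v}^{*}_b$. Summing over triples and collecting terms,
\[
\Ex{\OBJ}=\tfrac13(n-2)\tw+\tfrac1{3\pi}\Bigl(\sum_{i} d_i\,\Theta_i-2(n-1)\sum_{(i,j)\in E}w_{ij}\,\theta_{ij}\Bigr),\quad d_i:=\textstyle\sum_j w_{ij},\ \Theta_i:=\textstyle\sum_{k\neq i}\theta_{ik}.
\]

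It remains to lower bound the bracketed ``gain'' by a positive $\Omega(n\tw)$, and this is exactly where the two families of SDP constraints are needed. The spreading constraints at level $t^\star=\lfloor n/2\rfloor-1$ give $\sum_{j}\xijt{t^\star}\ge n-t^\star=\Theta(n)$; since $\xijt{t}=\tfrac12\norms{\vit{i}{t}-\vit{j}{t}}=1-\cos\theta_{ij}$ and $\arccos(1-x)\ge x$ on $[0,2]$, this forces $\Theta_i=\Omega(n)$ for every $i$ (a sharper constant follows from the exact extremal problem of minimizing $\sum_j\arccos(1-x_{ij})$ subject to $\sum_j x_{ij}\ge\Theta(n)$), hence $\sum_i d_i\Theta_i=\Omega(n\tw)$. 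For the subtracted term, being in the regime $\SDP>\gamma^\star(n-2)\tw$ is what keeps heavy edges short at level $n/2$: writing $y_{ij}:=\sum_{t\ge2}\xijt{t}$ and using $\xijt{1}=1$ gives $\SDP=(n-2)\tw-\sum_{(i,j)\in E}w_{ij}y_{ij}$, so $\sum_{(i,j)\in E}w_{ij}y_{ij}\le(1-\gamma^\star)(n-2)\tw$; the monotonicity constraints give $\xijt{t^\star}\le y_{ij}/(t^\star-1)=O(y_{ij}/n)$, and then $\arccos(1-x)\le\tfrac{\pi}{\sqrt2}\sqrt x$ together with Cauchy--Schwarz yield $\sum_{(i,j)\in E}w_{ij}\theta_{ij}=O\bigl(\sqrt{1-\gamma^\star}\,\tw\bigr)$, so $2(n-1)\sum_{(i,j)\in E}w_{ij}\theta_{ij}=O\bigl(\sqrt{1-\gamma^\star}\,n\tw\bigr)$. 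Substituting both estimates into the display shows the gain is at least a positive multiple of $n\tw$ once $1-\gamma^\star$ is a small enough constant, i.e.\ \Cref{alg:SDP-Random} achieves ratio $\tfrac13+\Omega(1)$ in the hard regime; balancing this against $\tfrac1{3\gamma^\star}$ and carefully bookkeeping the lower-order-in-$n$ terms pins $\alpha$ down to $0.336379$.

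The step I expect to be the main obstacle is this last one: one has to make the spreading-driven lower bound on $\sum_i d_i\Theta_i$ genuinely dominate the weight-discounting term $2(n-1)\sum_{(i,j)\in E}w_{ij}\theta_{ij}$, even though the SDP is optimized for the level-\emph{summed} objective rather than for the level-$(n/2)$ angle geometry itself. Thus the argument must extract ``heavy edges are short at level $n/2$'' purely from ``$\SDP$ is large'' via the monotonicity constraints, extract ``every vertex is far from most others'' separately from the spreading constraints, and then verify that the two constants leave enough slack to push $\alpha$ strictly above $\tfrac13$ for an appropriate choice of the case-split threshold $\gamma^\star$.
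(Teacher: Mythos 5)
Your proposal is correct and, at its core, follows the same route as the paper: the same dichotomy on $\OPT$ versus $(n-2)\tw$, the same SDP with spreading and monotonicity constraints, hyperplane rounding of the level-$(\floor{n/2}-1)$ vectors followed by random cuts, the triple decomposition with conditional probability $\tfrac13$ after the first cut, and the closed form $\tfrac{1}{2\pi}(\theta_{ik}+\theta_{jk}-\theta_{ij})$ for the one-versus-two split probability, which is exactly the paper's \Cref{lem:closed-form-prob} (derived there by the same $3\times 3$ system you sketch). Where you genuinely diverge is the aggregation step. The paper restricts attention to the set $\mathcal{H}$ of edges with $x^*_{ij}\le\epsilon_2$, lower-bounds each such edge's contribution via the factor-revealing program of \Cref{lem:factor-revealing}, and simply discards edges outside $\mathcal{H}$ (showing via monotonicity and the Case-2 assumption that they carry little weight). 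You instead keep all edges through the exact identity $\Ex{\OBJ}=\tfrac13(n-2)\tw+\tfrac1{3\pi}\bigl(\sum_i d_i\Theta_i-2(n-1)\sum_{(i,j)}w_{ij}\theta_{ij}\bigr)$, lower-bound $\Theta_i$ from the spreading constraints (your ``sharper extremal problem'' is precisely the paper's \Cref{lem:factor-revealing}, giving $\Theta_i\ge\tfrac{(n-2)\pi}{4}$), and control the penalty term globally via monotonicity, $\theta\le\tfrac{\pi}{\sqrt2}\sqrt{x}$, and Cauchy--Schwarz. This buys a cleaner analysis with no threshold $\epsilon_2$ and no loss from discarding long edges; carrying out your balancing with the sharp spreading bound gives roughly $\tfrac12-\tfrac23\sqrt{1-\gamma^\star}$ in Case 2 and an overall constant around $0.35$, i.e.\ slightly better than the paper's. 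One cosmetic inaccuracy: your final sentence claims the optimization ``pins $\alpha$ down to $0.336379$,'' but that specific number is an artifact of the paper's $(\epsilon_1,\epsilon_2)$ parameterization and would not come out of your calculation; what you actually get is a constant at least as large, which of course still proves the theorem as stated. You do still owe the reader the extremal argument for $\Theta_i$ and the explicit numerical balancing (with the lower-order terms), but both are routine.
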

\subsection{Analysis (Proof of \texorpdfstring{\Cref{thm:similarity}}{TEXT})}

We start by decomposing the similarity-HC objective as a summation over contributions of different triplets of vertices $i,j$ and $k$, where $(i,j)\in E$ and $k\neq i,j$. Accordingly, \ref{obj2} can be rewritten as:
\begin{equation}
\sum_{(i,j)\in E}w_{ij}\left(n-\lvert T_{ij}\rvert\right)=\sum_{(i,j)\in E}w_{ij}\lvert\nl(T_{ij})\rvert=\sum_{(i,j)\in E}\sum_{k\neq i,j}w_{ij}\mathbf{1}\{\textrm{$k$ is not a leaf of $T_{ij}$}\}
\end{equation}
The vertex $k$ does \emph{not} belong to the leaves of $T_{ij}$ if and only if at some point during the execution of the algorithm, $k$ gets separated from $i$ and $j$, while $i$ and $j$ still remain in the same cluster. Suppose $\Trand$ and $\Tsdp$ denote the HC tree returned by \Cref{alg:Random} and \Cref{alg:SDP-Random}. Moreover, let the random variables $\Zcal_{i,j,k}$ and $\Ycal_{i,j,k}$ denote the contributions of the edge $(i,j)$ and vertex $k\neq i,j$ to the objective value of \Cref{alg:Random} and \Cref{alg:SDP-Random} respectively, i.e.,
$$\Zcal_{i,j,k}\triangleq w_{ij}\mathbf{1}\{\textrm{$k$ is a non-leaf of $\Trand_{ij}$}\}~~~\textrm{and}~~~ \Ycal_{i,j,k}\triangleq w_{ij}\mathbf{1}\{\textrm{$k$ is a non-leaf of $\Tsdp_{ij}$}\}$$
Moreover, let $\Ycal_{i,j}=\sum_{k\neq i,j}\Ycal_{i,j,k}$ and $\Zcal_{i,j}=\sum_{k\neq i,j}\Zcal_{i,j,k}$. Let $\OPT$ be the optimal value of the HC objective. Fix $\epsilon_1>0$ and consider two cases. 

\paragraph{\textbf{Case 1:}}  $\OPT < (1-\epsilon_1)(n-2)\Ws$. By a simple argument, we claim that $\Ex{\Zcal_{i,j,k}}=\frac{1}{3}w_{ij}$. Given this claim, the expected objective value of  \Cref{alg:Random} is at least $\frac{n-2}{3}\sum_{(i,j)\in E}w_{ij}$. Moreover, $\OPT\leq (n-2)\Ws$, and hence \Cref{alg:Random} obtains $\frac{1}{3(1-\epsilon_1)}$ fraction of $\OPT$ in this case. To see why the claim holds, think of each random cut as flipping an independent unbiased coin for each vertex, and then placing the vertex on either sides of the cut based on the outcome of its coin. Now, look at the sequence of the coin flips of $i$, $j$ and $k$ during the execution of \Cref{alg:Random}. We want to find the probability of the event that for the first time the three sequences are not matched, but $i$'s sequence and $j$'s sequence are still matched. Fixing $i$'s sequence, the probability that all three are always matched is $\sum_{i=1}^{\infty}(1/4)^i=1/3$. Due to the symmetry, the rest of the probability will be divided equally between our target event and the event that for the first time these three sequences are not matched, but still $i$'s sequence and $k$'s sequence are matched. So, $\Pr{\textrm{$k$ is not a leaf of $\Trand_{ij}$}}=1/3$, which proves the claim.

\paragraph{\textbf{Case 2:}}  $\OPT \geq (n-2)(1-\epsilon_1)\Ws$. In this case, we want to find a lower bound on the objective value of Algorithm~\ref{alg:SDP-Random}. To this end, we show how to bound $\Ex{\Ycal_{i,j}}$ from below for a large enough fraction of edge weights. Consider the following events:
\begin{align*}
\Ecal_{i,j}&\triangleq \{\textrm{$i$ and $j$ remain together after the first cut}\},\\
\Ecal_{i,j,k}&\triangleq \{\textrm{$i$, $j$ and $k$  remain together after the first cut}\},\\
{\Ecal}_{i,j|k}&\triangleq \{\textrm{$i$, $j$ remain together and $k$  gets separated after the first cut}\}
\end{align*}
 We can rewrite $\Ex{\Ycal_{i,j,k}}$ as follows.
\begin{align*}
\Ex{\Ycal_{i,j,k}}&=\Ex{\Ycal_{i,j}\mathbf{1}\{\Ecal_{i,j}\}}=\Ex{\Ycal_{i,j,k}\mathbf{1}\{\Ecal_{i,j,k}\}}+\Ex{\Ycal_{i,j,k}\mathbf{1}\{{\Ecal}_{i,j|k}\}}\\
&=\Ex{\Ycal_{i,j,k}|\Ecal_{i,j,k}}\Pr{\Ecal_{i,j,k}}+\Ex{\Ycal_{i,j,k}|\Ecal_{i,j|k}}\Pr{\Ecal_{i,j|k}}
\end{align*}
Now, note that $\Ex{\Ycal_{i,j,k}|\Ecal_{i,j|k}}=w_{ij}$, as $k$ has been separated from $i$ and $j$ after the first cut.  Moreover, $\Pr{\textrm{$k$ is a non-leaf of $\Tsdp_{ij}$}|\Ecal_{i,j,k}}=1/3$, as Algorithm~\ref{alg:SDP-Random} performs random cuts after the first cut and the previous argument for analyzing $\Zcal_{i,j,k}$ will be applied. So, $\Ex{\Ycal_{i,j,k}|\Ecal_{i,j,k}}=w_{ij}/3$. Therefore, we have:
\begin{align*}
\Ex{\Ycal_{i,j,k}}&=\frac{w_{ij}}{3}\Pr{\Ecal_{i,j,k}}+w_{ij}\Pr{\Ecal_{i,j|k}}=\frac{w_{ij}}{3}\left(\Pr{\Ecal_{i,j,k}}+\Pr{\Ecal_{i,j | k}}+2\Pr{\Ecal_{i,j|k}}\right)=\frac{w_{ij}}{3}\left(\Pr{\Ecal_{i,j}}+2\Pr{\Ecal_{i,j|k}}\right),
\end{align*}
and hence we have:
\begin{equation}
\label{eq:yijk}
\Ex{\Ycal_{i,j}}=\sum_{k\neq i,j}\Ex{\Ycal_{i,j,k}}=\frac{w_{ij}}{3}\left((n-2)\Pr{\Ecal_{i,j}}+2\sum_{k\neq i,j}\Pr{{\Ecal}_{i,j|k}}\right)
\end{equation}
Fix $\epsilon_2>0$. Consider all edges for which $x^*_{ij}=\xijt{\floor{n/2}-1}\leq \epsilon_2$ (denoted by $\mathcal{H}\subseteq E$).  For each $(i,j)\in\mathcal{H}$, by applying the basics of hyperplane rounding, e.g. in \cite{goemans1995improved}, we have:
\begin{equation*}
\Pr{\Ecal_{i,j}}=\Pr{(\mathbf{v}^*_i\cdot \mathbf{v}_0)(\mathbf{v}^*_j\cdot \mathbf{v}_0)\geq 0}=1-\theta_{ij}/\pi,
\end{equation*}
where $\theta_{ij}$ is defined to be the angle between the vectors $\mathbf{v}^*_i$ and $\mathbf{v}^*_j$, i.e. $\theta_{i,j}=\cos^{-1}(\mathbf{v}^*_i\cdot\mathbf{v}^*_j)=\cos^{-1}(1-x^*_{ij})$. Now, it is clear that $\theta_{ij}\leq \bar{\theta}$ for edges in $\mathcal{H}$, where $1-\cos(\bar{\theta})=\epsilon_2$. Therefore,
\begin{equation}
\label{eq:h-lower}
\Pr{\Ecal_{i,j}}\geq 1-\bar{\theta}/\pi,~~~\forall (i,j)\in \mathcal{H}
\end{equation}

To bound $\sum_{k\neq i,j}\Pr{{\Ecal}_{i,j|k}}$ for every edge $(i,j)\in\mathcal{H}$, we first find an explicit closed-form for each probability term. Interestingly, despite the complicated nature of this calculation, our method is simple and is not relaying on any three-dimensional geometry. Hence, it might be of independent interest.

\begin{remark} To calculate an explicit closed-form for the probability of the event $\Ecal_{i,j|k}$, three involved correlated random variables $\mathbf{v}^*_i\cdot\mathbf{v}_0$, $\mathbf{v}^*_j\cdot\mathbf{v}_0$ and $\mathbf{v}^*_k\cdot\mathbf{v}_0$ need to be considered. In the direct approach, e.g. à la \cite{goemans1995improved}, we need to look at the unit projections of these three vectors and $\mathbf{v}_0$ onto the span of $\mathbf{v}^*_i$, $\mathbf{v}^*_j$, and $\mathbf{v}^*_k$ (hence a three-dimensional representation for each). Suppose $\mathbf{\tilde{v}}_0$ be the projection of $\mathbf{v}_0$ onto the mentioned three-dimensional space. As the entries of $\mathbf{v}_0$ are jointly Gaussian, $\mathbf{\tilde{v}}_0$ is indeed a uniformly random point from the three-dimensional sphere. Now, finding the probability of the event that $i$ and $j$ are on one side and $k$ is on the other side of the hyperplane with normal vector $\mathbf{\tilde{v}}_0$ involves a complicated calculation in this three dimensional geometry. 
\end{remark}
\begin{lemma} 
\label{lem:closed-form-prob}
For every triplet of vertices $i$,$j$ and $k$, $\Pr{{\Ecal}_{i,j|k}}=\frac{\theta_{ik}+\theta_{jk}-\theta_{ij}}{2\pi}$
\end{lemma}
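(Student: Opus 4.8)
The plan is to evaluate $\Pr{\Ecal_{i,j|k}}$ by a short combinatorial/linear-algebra argument, invoking only the elementary Goemans--Williamson fact $\Pr{\Ecal_{\ell,m}}=1-\theta_{\ell m}/\pi$ for the first (hyperplane) cut, and thereby sidestepping the three-dimensional integration alluded to in the remark above.

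First I would record two structural observations about how the first cut acts on the three unit vectors $\mathbf{v}^*_i,\mathbf{v}^*_j,\mathbf{v}^*_k$ (ties, which occur with probability zero, are ignored). \textbf{(a)} The events $\Ecal_{i,j|k}$, $\Ecal_{i,k|j}$, $\Ecal_{j,k|i}$ together with $\Ecal_{i,j,k}$ partition the sample space: among the signs of $\mathbf{v}^*_i\cdot\mathbf{v}_0$, $\mathbf{v}^*_j\cdot\mathbf{v}_0$, $\mathbf{v}^*_k\cdot\mathbf{v}_0$, either all three agree (this is $\Ecal_{i,j,k}$) or exactly one pair agrees (one of the three ``$|$'' events). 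Hence $\Pr{\Ecal_{i,j,k}}+\Pr{\Ecal_{i,j|k}}+\Pr{\Ecal_{i,k|j}}+\Pr{\Ecal_{j,k|i}}=1$. \textbf{(b)} For the pair $(i,j)$, the event ``$i,j$ remain together'' splits disjointly as $\Ecal_{i,j}=\Ecal_{i,j,k}\,\sqcup\,\Ecal_{i,j|k}$, so $\Pr{\Ecal_{i,j|k}}=\Pr{\Ecal_{i,j}}-\Pr{\Ecal_{i,j,k}}=1-\theta_{ij}/\pi-\Pr{\Ecal_{i,j,k}}$; the analogous identities hold for the pairs $(i,k)$ and $(j,k)$.

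Then I would combine the two: summing the three identities from (b) and comparing with (a) gives $3-\tfrac{\theta_{ij}+\theta_{ik}+\theta_{jk}}{\pi}-3\Pr{\Ecal_{i,j,k}}=1-\Pr{\Ecal_{i,j,k}}$, whence $\Pr{\Ecal_{i,j,k}}=1-\tfrac{\theta_{ij}+\theta_{ik}+\theta_{jk}}{2\pi}$. Substituting this into the identity from (b) for $(i,j)$ yields $\Pr{\Ecal_{i,j|k}}=1-\tfrac{\theta_{ij}}{\pi}-\Pr{\Ecal_{i,j,k}}=\tfrac{\theta_{ik}+\theta_{jk}-\theta_{ij}}{2\pi}$, which is the claim; the spherical triangle inequality $\theta_{ij}\le\theta_{ik}+\theta_{jk}$ confirms the expression is nonnegative, as it must be. (Nothing in the argument uses $(i,j)\in E$, so the formula holds for an arbitrary triple.)

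I do not expect a genuine technical obstacle here: the only thing to get right is the viewpoint — noticing that the triple of signs admits the clean four-way partition in (a), so that the four unknowns $\Pr{\Ecal_{i,j,k}},\Pr{\Ecal_{i,j|k}},\Pr{\Ecal_{i,k|j}},\Pr{\Ecal_{j,k|i}}$ are pinned down by a tiny linear system whose only inputs are the pairwise hyperplane-rounding probabilities $1-\theta_{\ell m}/\pi$. A minor point to check is that the tie-breaking convention in the definition of $S$ is irrelevant, which is immediate since $\mathbf{v}^*_\ell\cdot\mathbf{v}_0=0$ with probability zero for each $\ell$.
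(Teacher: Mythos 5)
Your proposal is correct and follows essentially the same route as the paper: both arguments avoid any three-dimensional geometry and pin down $\Pr{\Ecal_{i,j|k}}$ by a tiny linear system whose only inputs are the Goemans--Williamson pairwise probabilities $1-\theta_{\ell m}/\pi$, using the disjoint decompositions of the events after the first hyperplane cut. The only (cosmetic) difference is that you first solve for $\Pr{\Ecal_{i,j,k}}$ via the four-way partition and then subtract, whereas the paper writes the $3\times 3$ system directly in the three unknowns $\Pr{\Ecal_{i,j|k}},\Pr{\Ecal_{i,k|j}},\Pr{\Ecal_{j,k|i}}$ (using that when a pair separates, $k$ joins exactly one side) and solves it.
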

\begin{proof}
We start by the following key observation, which relates the quantities $\Pr{{\Ecal}_{i,j|k}}$, $\Pr{{\Ecal}_{i,k|j}}$ and $\Pr{{\Ecal}_{k,j|i}}$ to the original separation probabilities of a hyperplane rounding scheme:
\begin{enumerate}
\item $\Pr{{\Ecal}_{i,k|j}}+\Pr{{\Ecal}_{j,k|i}}=1-\Pr{{\Ecal}_{i,j}}=\tfrac{\theta_{ij}}{\pi}$
\item $\Pr{{\Ecal}_{i,j|k}}+\Pr{{\Ecal}_{i,k|j}}=1-\Pr{{\Ecal}_{j,k}}=\tfrac{\theta_{jk}}{\pi}$
\item $\Pr{{\Ecal}_{j,k|i}}+\Pr{{\Ecal}_{i,j|k}}=1-\Pr{{\Ecal}_{i,k}}=\tfrac{\theta_{ik}}{\pi}$
\end{enumerate}
Solving the above $3\times3$ system, we can obtain the desired closed-form expressions for $\Pr{{\Ecal}_{i,j|k}}$, $\Pr{{\Ecal}_{i,k|j}}$ and $\Pr{{\Ecal}_{k,j|i}}$ in terms of the angles between the vectors:
\begin{equation*}
\Pr{{\Ecal}_{i,j|k}}= \tfrac{\theta_{ik}+\theta_{jk} - \theta_{ij}}{2\pi}~~~,~~~\Pr{{\Ecal}_{i,k|j}}= \tfrac{\theta_{ij}+\theta_{jk} - \theta_{ik}}{2\pi}~~~,~~~\Pr{{\Ecal}_{k,j|i}}= \tfrac{\theta_{ik}+\theta_{ij} - \theta_{jk}}{2\pi}\qedhere
\end{equation*}

\end{proof}

%
%
%
%
%
%

In the next step of the analysis, we find a worst-case lower-bound for $\sum_{k\neq i,j}\Pr{{\Ecal}_{i,j|k}}$ through a \emph{factor revealing program}. More accurately, we set up a minimization problem where the objective function is equal to $\sum_{k\neq i,j}\Pr{{\Ecal}_{i,j|k}}$. For the constraints, note that due to the spreading constraints of the similarity-HC SDP relaxation (\ref{eq:SDP-HC}) for vertices $i$ and $j$, we have 
$$\sum_{k\neq i}{x^*_{ik}}\geq n-\floor{n/2}+1~~~~,~~~~\sum_{k\neq j}{x^*_{jk}}\geq n-\floor{n/2}+1$$
and therefore $\sum_{k\neq i}\cos(\theta_{ik})\leq n/2-1$ and $\sum_{k\neq j}\cos(\theta_{jk})\leq n/2-1$. Now, by applying \Cref{lem:closed-form-prob}, we can lower bound $\sum_{k\neq i,j}\Pr{{\Ecal}_{i,j|k}}$ by the optimal solution of the following optimization problem:

\begin{equation}
\label{eq:factor-reveal}
\tag{$\mathcal{P}_{\textrm{lower-bound}}$}
\begin{aligned}
& \underset{}{\textit{minimize}}
& & \frac{1}{2\pi}\sum_{k\neq i,j}  \left(\theta_{ik}+\theta_{jk} - \bar{\theta}\right) \\
& \textit{subject to}
& &\sum_{k\neq i}\cos(\theta_{ik})\leq n/2-1, & \\
&
& &\sum_{k\neq j}\cos(\theta_{jk})\leq n/2-1, & \\
& 
& &0\leq \theta_{ik}\leq\frac{\pi}{2}~~,~~0\leq \theta_{jk}\leq\frac{\pi}{2} &~~~\forall k\\
\end{aligned}
\end{equation}
Note that we restrict our attention to $0\leq\theta_{ik},\theta_{jk}\leq \pi/2$, simply because in \ref{eq:SDP-HC} we force $\xijt{t}\leq 1$, and hence $\mathbf{v}^*_i\cdot\mathbf{v}^*_j\geq 0$ for all $i,j\in V$. We now have this lemma, whose proof is deferred to \Cref{appendix:sec4}.
\begin{lemma} 
\label{lem:factor-revealing}
The optimal solution of the factor revealing program \ref{eq:factor-reveal} is lower-bounded by 
$$
(n-2)\left(\tfrac{1}{4}-\tfrac{\bar{\theta}}{2\pi}\right)
$$
\end{lemma}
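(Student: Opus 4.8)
The plan is to analyze the minimization program $\mathcal{P}_{\textrm{lower-bound}}$ directly. After pulling out the constant term $-\tfrac{(n-2)\bar\theta}{2\pi}$, the objective splits as $\tfrac{1}{2\pi}\bigl(\sum_{k\neq i,j}\theta_{ik} + \sum_{k\neq i,j}\theta_{jk}\bigr) - \tfrac{(n-2)\bar\theta}{2\pi}$, and the two sums are coupled only through their respective cosine constraints, which involve disjoint sets of variables (the $\theta_{ik}$'s on one side, the $\theta_{jk}$'s on the other). So the program decouples into two identical subproblems, and it suffices to lower bound $\sum_{k\neq i,j}\theta_{ik}$ subject to $\sum_{k\neq i}\cos(\theta_{ik}) \le n/2 - 1$ and $0\le\theta_{ik}\le\pi/2$; I expect each subproblem to contribute $(n-2)\cdot\tfrac{\pi/4}{2\pi} = \tfrac{n-2}{8}$ in the aggregate, so that the two together with the constant give exactly $(n-2)\bigl(\tfrac14 - \tfrac{\bar\theta}{2\pi}\bigr)$.

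First I would set up the single-sum subproblem cleanly. Note the constraint sum $\sum_{k\neq i}\cos(\theta_{ik})$ runs over all $k\neq i$ (including $k=j$), whereas the objective sum $\sum_{k\neq i,j}\theta_{ik}$ omits $k=j$; since $\theta_{ij}=\bar\theta$ is fixed on $\mathcal{H}$ (we are in the regime $\theta_{ij}\le\bar\theta$, and the program uses $\bar\theta$ as the worst case), this only shifts the effective cosine budget by the constant $\cos\bar\theta$, which is a lower-order $O(1)$ term and I would track it as such. So modulo $O(1)$ terms the subproblem is: minimize $\sum_{k} \theta_k$ over $\theta_k\in[0,\pi/2]$ subject to $\sum_k \cos\theta_k \le n/2 - O(1)$, with $n-2$ variables. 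To minimize the sum of angles we want each $\theta_k$ as small as possible, i.e., each $\cos\theta_k$ as large as possible, but the budget constraint forces the average of $\cos\theta_k$ down to about $1/2$. Since $\theta\mapsto\theta$ is a convex increasing function of $\cos\theta$ on $[0,\pi/2]$ (indeed $\theta = \arccos(\cos\theta)$ and $\arccos$ is convex on $[0,1]$), by Jensen / a convexity exchange argument the minimum of $\sum\theta_k$ under a fixed bound on $\sum\cos\theta_k$ is attained when all the $\cos\theta_k$ are equal, namely $\cos\theta_k = 1/2 - O(1/n)$, giving $\theta_k = \pi/3 - O(1/n)$ and $\sum_k\theta_k = (n-2)\pi/3 - O(1)$.

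Then I would assemble the pieces: the two subproblems contribute $\tfrac{1}{2\pi}\cdot 2\cdot (n-2)\tfrac{\pi}{3} = (n-2)\tfrac13$ up to lower-order terms, minus the constant $\tfrac{(n-2)\bar\theta}{2\pi}$, which would give $(n-2)\bigl(\tfrac13 - \tfrac{\bar\theta}{2\pi}\bigr)$ — slightly better than claimed, so the stated bound $(n-2)\bigl(\tfrac14 - \tfrac{\bar\theta}{2\pi}\bigr)$ is a safe weakening that absorbs all the $O(1)$ slack cleanly (indeed $\tfrac13 > \tfrac14$ with room to spare, so no careful bookkeeping of lower-order terms is needed). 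The main obstacle I anticipate is making the convexity argument fully rigorous at the boundary: one must check that pushing any $\theta_k$ to the endpoints $0$ or $\pi/2$ cannot help, i.e., that the constrained minimum really is the interior equal-angles point and not a mixture of extreme values — this is where one invokes convexity of $\arccos$ on $[0,1]$ to rule out such splitting, together with the observation that at the optimum the cosine budget constraint is tight (otherwise one could decrease some $\theta_k$). Handling the fixed $\theta_{ij}=\bar\theta$ term and the off-by-one between the two index sets is routine bookkeeping that only affects $O(1)$ terms, which the weak target bound comfortably absorbs.
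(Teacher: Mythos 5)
There is a genuine error at the heart of your argument: the convexity claim is backwards. On $[0,1]$ the function $x\mapsto\arccos(x)$ is \emph{concave}, not convex (its second derivative is $-x/(1-x^2)^{3/2}\le 0$ there; it is convex only on $[-1,0]$, and the program forces $\cos\theta_{ik}\in[0,1]$). Consequently Jensen runs in the wrong direction: among points with $\sum_k\cos\theta_{ik}$ equal to the budget, the equal-angles configuration $\theta_{ik}\approx\pi/3$ \emph{maximizes} $\sum_k\theta_{ik}$ rather than minimizing it. The minimum of a concave function of the $\cos\theta_{ik}$'s over the polytope $\{\sum_k\cos\theta_{ik}\le n/2-1,\ \cos\theta_{ik}\in[0,1]\}$ is attained at a vertex, i.e.\ at a bang-bang configuration in which roughly $n/2-1$ of the angles are $0$ and the remaining $\approx(n-2)/2$ are $\pi/2$ (with at most one fractional coordinate). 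This gives each subproblem the value $\approx(n-2)\pi/4$, not $(n-2)\pi/3$; your claimed intermediate bound $(n-2)\bigl(\tfrac13-\tfrac{\bar\theta}{2\pi}\bigr)$ is simply false, and the ``room to spare'' you rely on to absorb boundary issues and $O(1)$ bookkeeping does not exist --- the constant $\tfrac14$ in the lemma is essentially tight for the factor revealing program.

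Your overall framing (decoupling into two identical single-index subproblems and tracking the constant $-\tfrac{(n-2)\bar\theta}{2\pi}$ term) does match the paper's proof, but the paper completes the key step in exactly the opposite way you propose: it first shows the cosine constraint must be tight at an optimum, then uses a perturbation argument (shift mass between two interior angles; the objective is unchanged while concavity of cosine creates slack in the constraint, contradicting tightness) to conclude the optimal angles lie in $\{0,\pi/2\}$ except possibly one. Counting then shows at most $n/2-1$ angles can be $0$, so at least $(n-2)/2$ equal $\pi/2$, yielding the $(n-2)\pi/4$ bound per subproblem and hence $(n-2)\bigl(\tfrac14-\tfrac{\bar\theta}{2\pi}\bigr)$ overall. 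To repair your write-up you would need to replace the Jensen step with such an extreme-point/perturbation argument (or equivalently, minimize the concave objective over the polytope at its vertices) and then carry out the constant bookkeeping carefully, since the final bound has no slack to hide it in.
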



By combining \cref{eq:yijk} and \cref{eq:h-lower} with \Cref{lem:factor-revealing}, we have:
\begin{align}
\Ex{\OBJ_{\textrm{ALG}_2}}&\geq \sum_{(i,j)\in\mathcal{H}}\Ex{\Ycal_{i,j}}\geq (n-2)\left(\sum_{(i,j)\in\mathcal{H}}w_{ij}\right) \left(\frac{1}{3}\cdot \left(1-\frac{\bar{\theta}}{\pi}\right)+ \frac{2}{3}\cdot \left(\frac{1}{4}-\frac{\bar{\theta}}{2\pi}\right)\right)\nonumber \\
\label{eq:bound-sdp}
&\geq 
(n-2)\left(\sum_{(i,j)\in\mathcal{H}}w_{ij}\right)\left(\frac{1}{2}-\frac{2\bar{\theta}}{3\pi} \right)
\end{align}
We finally bound the total weight of edges in $\mathcal{H}$. Note that for an edge $(i,j)\notin\mathcal{H}$, $x^*_{ij}=\xijt{\floor{n/2}-1}>\epsilon_2$. Therefore,  due to the monotonicity constraint in \ref{eq:SDP-HC}, $\xijt{t}>\epsilon_2,\forall 1\leq t\leq \floor{n/2}-1$. Now we have:
\begin{align*}
\textrm{\SDP}&=\sum_{t=1}^{n-1}\sum_{(i,j)\in E}w_{ij}(1-\xijt{t})\leq(n-2)\sum_{(i,j)\in E}w_{ij}-\sum_{t=1}^{\floor{n/2}-1}\sum_{(i,j)\in E}w_{ij}\xijt{t}\\
& \leq (n-2)\sum_{(i,j)\in E}w_{ij}-\frac{\epsilon_2(n-2)}{2}\cdot\sum_{(i,j)\in E\setminus \mathcal{H}}w_{ij}
\end{align*}
On the other hand, based on the assumption of Case 2, we know 
\begin{equation}
\SDP\geq\OPT\geq (n-2)(1-\epsilon_1)\sum_{(i,j)\in E}w_{ij}
\end{equation}
By rearranging the terms we have $\sum_{(i,j)\in \mathcal{H}}w_{ij}\geq (1-\frac{2\epsilon_1}{\epsilon_2})\Ws$. Now, combined with \Cref{eq:bound-sdp}, we can show \Cref{alg:SDP-Random} obtains $(1-\frac{2\epsilon_1}{\epsilon_2})(\frac{1}{2}-\frac{2\cos^{-1}(1-\epsilon_2)}{3\pi})$ fraction of $\OPT$. 

By balancing out the two cases, finding the optimal $\epsilon_1$ as a function of $\epsilon_2$, and finally by setting $\epsilon_2\approx 0.139$ we get the desired approximation factor of $\approx0.336379$. For more details, refer to \Cref{appendix:sec4}.

\section{Beating Average-Linkage for Dissimilarity-HC}
\label{sec:beating-dissimilarity}
\newcommand{\algpeel}{\texttt{ALG}_{\textrm{peel}}}
\newcommand{\algtotal}{\texttt{ALG}}
\newcommand{\algnpeel}{\texttt{ALG}_{\textrm{cut}}}
\newcommand{\optpeel}{\texttt{OPT}_{\textrm{red}}}
\newcommand{\opttotal}{\texttt{OPT}}
\newcommand{\optnpeel}{\texttt{OPT}_{\textrm{blue}}}
\newcommand{\optnpeelchain}{\texttt{OPT}_{\textrm{blue-chain}}}
\newcommand{\optnpeelcut}{\texttt{OPT}_{\textrm{blue-cut}}}
In this section we focus on the dissimilarity-HC objective~\eqref{obj3}. As demonstrated in \Cref{sec:dissimilarity-lowerbound}, $\AL$ fails to obtain better than $\frac{2}{3}$ fraction of the optimum in the worst-case. Similarly, ``\emph{random always}''  fails to beat this approximation ratio, simply because its objective value on any instance is exactly equal to $\frac{2}{3}n\tw$, while in a bipartite graph the optimum dissimilarity-HC objective is equal to $n\tw$. Therefore, one natural question to ask is if there exists a polynomial time algorithm that can beat the $\frac{2}{3}$ approximation factor. We answer this question in the affirmative by providing a simple algorithm. 
\subsection{The ``Peel-off First, Max-cut Next'' Algorithm}
By  looking at the structure of the dissimilarity-HC objective function in \cref{obj3}, it is clear that the top-level cuts of the tree, i.e. those corresponding to clusters of larger sizes, have considerable contributions to the objective function. For example, consider a simple algorithm that starts with a random cut and then forms the rest of the tree arbitrarily. This algorithm can still obtain an objective value of $\frac{1}{2}n\tw$. Inspired by this observation, a tempting idea to beat the approximation factor of $\frac{2}{3}$ is to start with an approximation algorithm for the max-cut, e.g. \cite{goemans1995improved}, and then construct the rest of the hierarchical tree (probably by random cutting or by continuing with the same max-cut algorithm). 

However, this naive approach fails because of the following instance. Suppose we have a graph with an embedded clique of size $\epsilon n$ (for an arbitrarily small $\epsilon>0$) and the rest of the weights are zero. The optimum dissimilarity-HC solution clearly peels off vertices of the clique one by one, and obtains an objective value of at least $n(1-\epsilon)\tw$. However, the ``recursive max-cut'' or the ``max-cut first, random next'' both cut the clique into two (almost) symmetric halves at each iteration, and obtain an objective value of at most 
$$\textrm{objective-value}\leq \left(n\frac{\tw}{2}+\epsilon n\frac{\tw}{4}\right)+\left(n\frac{\tw}{8}+\epsilon n\frac{\tw}{16}\right)+\left(n\frac{\tw}{32}+\epsilon n \frac{\tw}{64}\right)+\ldots\leq \frac{2+\epsilon}{3}n\tw$$  

The above example suggests a natural modification to our idea, i.e. to first peel off \emph{high weighted degree} vertices, and then use a max-cut algorithm. Intuitively, if in such a pre-processed instance the optimum objective value of the dissimilarity-HC is close to $n\tw$, then there should exist a considerably large cut. This large cut can be detected by an approximate max-cut algorithm, and will provide a large enough objective value for the dissimilarity-HC if used as a top-level cut in the final hierarchical clustering tree. If there is a constant gap between the optimum and $n\tw$, one can run ``\emph{random always}''  and already get an approximation factor strictly better than $\frac{2}{3}$. Formally, we propose \emph{``peel-off first, max-cut next''} (\Cref{alg:peel-maxcut}) and show how the better of this algorithm and the ``\emph{random always}''  (\Cref{alg:Random}) beats the $\frac{2}{3}$-approximation factor by a small constant. 

\begin{algorithm}[h]
\caption{Peel-off First, Max-cut Next}
\label{alg:peel-maxcut}
\begin{algorithmic}[1]
\State \textbf{input:} $G=(V,E)$, (dissimilarity) weights $\{w_{ij}\}_{(i,j)\in E}$, and parameter $\gamma>0$. 
\State Initialize hierarchical clustering tree $T\leftarrow \emptyset$.
\State Set the \emph{peeling-off threshold} to be $\tau=\frac{2\tw}{n}\cdot \gamma$. 
\State $\tilde{V}\leftarrow V$ and $\tilde{E}\leftarrow E$. 
\While{$\exists$ a vertex $v\in\tilde{V}$ such that $\displaystyle\sum_{u\in V: (v,u)\in E}w_{vu}>\tau$}\Comment \texttt{peeling off phase.}
\State Update the HC binary tree $T$ by adding the cut $(\{v\},\tilde{V}\setminus\{v\})$ to the tree.
\State $\tilde{V}\leftarrow\tilde{V}\setminus \{v\}$ and $\tilde{E}\leftarrow \tilde{E}\setminus \{e\in E: e~\textrm{incident to}~v \}$.\Comment {\texttt{induced subgraph on $\tilde{V}\setminus \{v\}$.}}
\EndWhile
\State Run \citet{goemans1995improved} for max-cut on $\tilde{G}=(\tilde{V},\tilde{E})$. \Comment{\texttt{max-cutting phase.}}
\State Let the resulting cut be $(S,\tilde{V}\setminus S)$, and update the HC tree $T$ by adding this cut to the tree.
\State Run ``\emph{random always}'' (\Cref{alg:Random}) on $S$ and $\tilde{V}\setminus S$. Add the resulting binary trees to $T$. 
\State Return the tree $T$. 
\end{algorithmic}
\end{algorithm}

\begin{theorem}
\label{thm:vincent-upperbound}
There exists a choice of $\gamma>0$ so that the best of \emph{``peel-off first, max-cut next''} with parameter $\gamma$ (\Cref{alg:peel-maxcut})  and ``\emph{random always}''  (Algorithm~\ref{alg:Random}) is an $\alpha$-approximation algorithm for maximizing the dissimilarity-HC objective, where $\alpha=0.667078 > \tfrac23$.
\end{theorem}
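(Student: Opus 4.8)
The plan follows the two-regime template of the proof of \Cref{thm:similarity}. Fix constants $\gamma>0$ (the peeling parameter of \Cref{alg:peel-maxcut}) and $\epsilon_0>0$, and split on the size of $\OPT$. If $\OPT<(1-\epsilon_0)\,n\tw$, then ``\emph{random always}'' (\Cref{alg:Random}), whose expected dissimilarity-HC value on \emph{any} instance equals $\tfrac23 n\tw$ up to lower-order terms, is already a $\tfrac{2}{3(1-\epsilon_0)}>\tfrac23$ approximation, so the whole difficulty lies in the regime $\OPT\ge(1-\epsilon_0)\,n\tw$, which I treat with \emph{``peel-off first, max-cut next''} (\Cref{alg:peel-maxcut}). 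In that regime let $R$ be the peeled set, $\tilde n=n-|R|$, $\tilde G=(\tilde V,\tilde E)$ the graph reaching the max-cut phase, $\tilde\tw=\sum_{e\in\tilde E}w_e$, and $\tau=\tfrac{2\gamma\tw}{n}$ the peeling threshold. Since the total weighted degree is $2\tw$ and every peeled vertex has weighted degree $>\tau$, we have $|R|<n/\gamma$, hence $\tilde n\ge(1-\tfrac1\gamma)n$; and every vertex of $\tilde G$ has $\tilde G$-weighted degree at most $\tau$.

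Decompose both $\OPT$ and the algorithm's reward into a ``red'' part (edges meeting $R$) and a ``blue'' part (edges of $\tilde E$). The red part is easy: each red edge is charged to exactly one peeling cut, made when the active cluster still has $\ge\tilde n$ vertices, so the algorithm earns at least $\tilde n(\tw-\tilde\tw)\ge(1-\tfrac1\gamma)n(\tw-\tilde\tw)$ there, while $\OPT$ earns at most $n(\tw-\tilde\tw)$; thus the algorithm keeps a $(1-\tfrac1\gamma)$-fraction of $\OPT$'s red reward, above $\tfrac23$ for $\gamma$ a sufficiently large constant. For the blue part, subtracting the red bound from $\OPT\ge(1-\epsilon_0)n\tw$ and contracting the $R$-leaves out of $\OPT$'s tree gives $\OPT_{\mathrm{dissim}}(\tilde G)\ge\tilde n\tilde\tw-\epsilon_0 n\tw$, i.e.\ $\OPT_{\mathrm{dissim}}(\tilde G)$ is within $\epsilon_0 n\tw$ of its trivial maximum $\tilde n\tilde\tw$; meanwhile $\OPT$'s blue reward is at most $n\tilde\tw$, so it suffices to show the algorithm's blue reward is a large-enough fraction of $n\tilde\tw$.

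So consider $\OPT$'s tree on $\tilde G$. The internal nodes of size $>\tilde n/2$ are pairwise comparable, hence form a single chain $\tilde V=C_0\supsetneq C_1\supsetneq\cdots$; call a blue edge \emph{cut-type} if its endpoints are separated at such a node and \emph{chain-type} otherwise. A chain-type edge earns multiplier $\le\tilde n/2$, hence loses $\ge\tilde n/2$ per unit weight against the maximum $\tilde n\tilde\tw$; since the loss $\tilde n\tilde\tw-\OPT_{\mathrm{dissim}}(\tilde G)$ is only $\le\epsilon_0 n\tw$, the chain-type weight is $O(\epsilon_0 n\tw/\tilde n)$---negligible when $\tilde\tw$ is not---so essentially all blue weight is cut-type and is resolved along the chain by the cuts $u_t:=w_{\tilde G}(C_{t+1},C_t\setminus C_{t+1})$. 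Here the degree bound of $\tilde G$ is decisive. With $D_t:=C_t\setminus C_{t+1}$ we have $u_t\le\tau\,|D_t|$, while $\tilde n\tilde\tw-\OPT_{\mathrm{dissim}}(\tilde G)\ge\sum_t\big(\sum_{s<t}|D_s|\big)u_t$; if all $|D_t|$ are small this forces $\sum_t\big(\sum_{s<t}|D_s|\big)u_t\ge c\,\tilde\tw^2/\tau$ for a constant $c>0$, hence (using the loss bound and $\tau=\tfrac{2\gamma\tw}{n}$) $\tilde\tw=O(\sqrt{\epsilon_0\gamma}\,\tw)$, and then the red part alone clears $\tfrac23$. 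Otherwise some step peels a constant fraction $\delta\tilde n$ of vertices at once, exhibiting a near-balanced cut $(C_{t+1},D_t)$; its weight $u_t$ must be within an $O(\epsilon_0/\delta)$-fraction of $\tilde\tw$, since otherwise the weight of $C_t$ is concentrated inside $C_{t+1}$ and inside the $(\le\tilde n/2)$-vertex set $D_t$ (the latter chain-type, hence negligible), and one recurses inside $C_{t+1}$ at most $O(1/\delta)$ times. Either way, unless $\tilde\tw$ is negligible, $\MC(\tilde G)\ge u_t$ for some such step, so $\MC(\tilde G)\ge(1-\eta)\tilde\tw$ for an $\eta=\eta(\gamma,\epsilon_0)$ that tends to $0$ as $\epsilon_0\to0$ (with $\gamma$ chosen suitably).

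Now feed this into \Cref{alg:peel-maxcut}: the Goemans--Williamson step finds a cut of $\tilde G$ of weight $\ge\alpha_{\mathrm{GW}}\MC(\tilde G)$ with $\alpha_{\mathrm{GW}}\approx0.878$, and those crossing edges sit at a cluster of size $\tilde n$, so the algorithm's blue reward is at least $\alpha_{\mathrm{GW}}\MC(\tilde G)\,\tilde n\ge\alpha_{\mathrm{GW}}(1-\eta)(1-\tfrac1\gamma)\,n\tilde\tw$ (the ``\emph{random always}'' run on the two sides only adds more). Combining with the red-part ratio $\ge1-\tfrac1\gamma$ and with the ``negligible $\tilde\tw$'' subcase, and using that the best of the two algorithms covers both regimes of $\OPT$, the approximation factor is the minimum of a few explicit functions of $\gamma,\epsilon_0$ (and the negligibility threshold); optimizing gives $\alpha=0.667078>\tfrac23$. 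The main obstacle is exactly the blue-part dichotomy of the previous paragraph: a clean bound of the naive form $c\,\tilde n\tilde\tw+c'\,\tilde n\MC(\tilde G)$ is \emph{false} as soon as $\tilde G$ has many low-degree or isolated vertices, so one must run the recursive/balanced-cut argument above and carefully account for all of its losses, and it is this that forces $\epsilon_0$ to be taken very small---funnelled through the $\tfrac{2}{3(1-\epsilon_0)}$ bound of the random-always regime, that smallness of $\epsilon_0$ is precisely why the final constant barely clears $\tfrac23$; the red-edge accounting and the random-always regime themselves are routine.
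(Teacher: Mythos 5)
Your high-level architecture is the same as the paper's: split on whether $\OPT\ge(1-\epsilon)n\tw$, let ``\emph{random always}'' handle the first case, account for the peeled weight exactly as in \Cref{lem:peel} (your red-part bound is that lemma verbatim), and in the main case extract from the optimum tree a near-balanced split of the low-degree residual graph whose weight is close to $\tilde\tw$, feed it to Goemans--Williamson at a cluster of size $\ge(1-\tfrac1\gamma)n$, and optimize parameters. The paper realizes this by following the chain of maximal clusters of size $>n(1-\delta)$ down to the first split $(L_k,R_k)$ with both sides below the threshold, bounding the weight incident to the $\le\delta n$ prefix vertices by the degree cap (\Cref{lem:bluech}) and the weight inside $L_k,R_k$ by the loss argument (\Cref{cl:wLR}); your contraction bound $\OPT_{\mathrm{dissim}}(\tilde G)\ge\tilde n\tilde\tw-\epsilon_0 n\tw$ (which is correct) and your ``weight inside $C_{t+1}$ and inside $D_t$ is negligible'' step are the same ideas.

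The blue-part dichotomy, however, has concrete holes as written. First, with your threshold $\tilde n/2$ the last chain node $C_m$ splits into two children \emph{both} of size $\le\tilde n/2$; edges separated there are cut-type but are not of the form $u_t=w(C_{t+1},C_t\setminus C_{t+1})$, so your Case~A conclusion (``all $|D_t|$ small $\Rightarrow\tilde\tw=O(\sqrt{\epsilon_0\gamma})\tw$, red part suffices'') is false as stated: take $\tilde G$ to be the paper's own tight instance (complete bipartite minus a perfect matching), where the optimum cuts at the root into two halves of size $\tilde n/2$ --- the chain has no $u_t$ at all, the loss is zero, yet $\tilde\tw$ is everything. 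That final split is precisely the paper's $(L_k,R_k)$ and is the max-cut witness you need, not a case you may discard. Second, in Case~B the assertion $u_t\ge(1-O(\epsilon_0/\delta))\tilde\tw$ ignores $\sum_{s<t}u_s$, the weight already shed along the chain; with threshold $\tilde n/2$ the prefix $\sum_{s<t}|D_s|$ can be as large as $\tilde n/2$, so this weight is not automatically negligible and must be controlled by your quadratic-loss estimate (or avoided, as the paper does, by using threshold $(1-\delta)$ so the prefix has at most $\delta n$ vertices and hence weight at most $\delta n\tau$); the sentence ``one recurses inside $C_{t+1}$ at most $O(1/\delta)$ times'' does not supply this accounting. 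Third, $\delta$ cannot be a constant fraction: your Case~A only yields $\tilde\tw=O\left(\delta\gamma+\sqrt{\epsilon_0\gamma}\right)\tw$, so $\delta$ must be taken of order $\sqrt{\epsilon_0/\gamma}$, exactly the choice in \Cref{lem:final}. Finally, the constant is asserted rather than derived: your route carries extra loss terms (the $\sqrt{\gamma\epsilon_0}\,\tw$ from the quadratic argument and the prefix weight) beyond the paper's bound $\rho_{\textrm{GW}}\left(1-\tfrac1\gamma\right)\left(1-\tfrac{\epsilon/\delta}{1-\epsilon}-\tfrac{\delta\gamma}{1-\epsilon}\right)\OPT$, and since the margin over $\tfrac23$ is only about $4\cdot10^{-4}$, you would have to redo the optimization of \Cref{lem:final} with your constants to confirm that $\alpha=0.667078$ is actually attained.
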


\subsection{Analysis (Proof of \texorpdfstring{\Cref{thm:vincent-upperbound}}{TEXT})}

Fix a parameter $\epsilon$. Let $\OPT$ be the optimal objective value of the dissimilarity-HC. Similar to the proof of \Cref{thm:similarity}, consider two cases:
\paragraph{\textbf{Case 1:}}  $\OPT < (1-\epsilon)n\tw$. A simple argument  (refer to the proof of \Cref{thm:similarity}) shows that the expected objective value of \Cref{alg:Random} is exactly equal to $\frac{2}{3}{n\tw}$ in this case, and therefore it obtains $\frac{2}{3(1-\epsilon)}$ fraction of $\OPT$ (for an exposition of this proof, we refer the reader to \citealp{chatziafratis2018hierarchical})

\paragraph{\textbf{Case 2:}}  $\OPT \geq (1-\epsilon)n\tw$.
In this case, let the optimum (binary) HC tree be $T^*$. Fix another parameter $\delta <\tfrac12$. The collection of all maximal clusters of size at most $n(1-\delta)$ forms a partition of the vertices. Here is a recursive way of looking at this partition: Imagine we start from the root of $T^*$. Each time the optimum tree performs a binary cut, we consider the two produced clusters (see \Cref{fig:opt-structure}). If any of these clusters has size at most $n(1-\delta)$, then it is a maximal cluster of size at most $n(1-\delta)$ and, by definition, it will be added to the partition. As $\delta<\frac12$, either both of these clusters must be of size at most $n(1-\delta)$, or exactly one of them is smaller than $n(1-\delta)$ while the other is strictly larger than $n(1-\delta)$. If the latter is true, we recursively follow the tree along the bigger cluster, i.e. we make the bigger cluster the new root and we iterate. Otherwise, if the former is true, we stop following the tree as we would have already produced two smaller than $n(1-\delta)$ pieces.  We denote the produced sequence of clusters by $(L_1,R_1), \ldots,(L_k,R_k)$, where $(L_i,R_i)$ are the two clusters produced by $T^*$ at the $i^{\textrm{th}}$ split. Without loss of generality we set:

$$\lvert L_i \rvert\ge n(1-\delta)> \lvert R_i \rvert,~i=1,\ldots,k-1~~~~~~~~\textrm{and}~~~~~~~~\max\left(\lvert L_k\rvert, \lvert R_k\rvert\right)< n(1-\delta)$$
 Based on the above construction, the resulting partition consists of the sets $R_1, R_2,\ldots, R_k$ and $L_k$. Note that $\lvert L_k\rvert+\lvert R_k\rvert=\lvert L_{k-1}\rvert\geq n(1-\delta)$, and therefore the rest of the graph contains $\rvert V\setminus(L_k\cup R_k)\rvert = n - \lvert L_{k-1}\rvert \leq \delta n$ vertices. 
 
 \begin{figure}[ht]
  \centering
  \includegraphics[width=0.65\columnwidth]{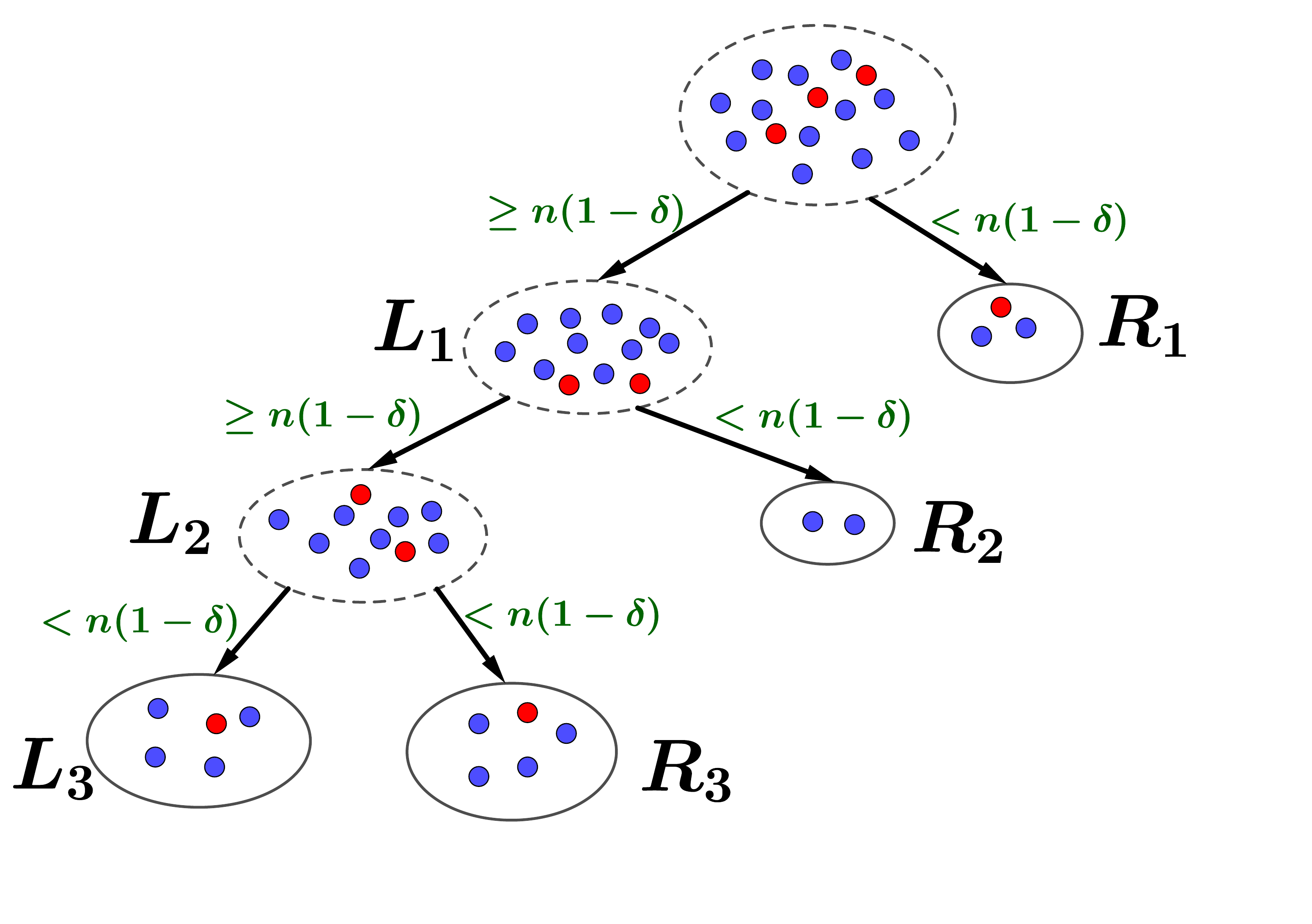}
     \caption{The layered structure of the optimum tree $T^*$ in Case 2.}
     \label{fig:opt-structure}
\end{figure}

 \begin{table}

\centering
 \begin{tabular}{||c|c||} 

  \hline
$\OPT$  & value of HC for the optimum solution $T^*$ \\ [0.5ex] 
 \hline
 $\optpeel$ & contribution of edges with at least one red endpoint in optimum  \\

 \hline

 $\optnpeel$ & contribution of blue edges in optimum \\
 \hline
 $\optnpeelchain$ &contribution of blue edges $(u,\cdot), u\in V\setminus(L_k\cup R_k)$ in optimum \\
 \hline
 $\optnpeelcut$ &contribution of blue edges $(u,v), u,v\in L_k\cup R_k$ in optimum \\
 \hline
 $\algpeel$ & objective value gained by our algorithm during peeling-off phase 1 \\
 \hline
 $\algnpeel$ &  objective value gained by our algorithm during max-cutting phase 2 \\
 \hline
$ \MC_{blue}$ & max-cut value only among blue vertices available to our algorithm in phase 2  \\
  \hline
\end{tabular}
\caption{A guide through the different variable names used in the proof.}
\label{tab:explain}
\end{table}

Before delving into the proof details for~\Cref{thm:vincent-upperbound}, we first provide an overview of the proof highlighting the main ideas.
\paragraph{Proof Sketch:} Our algorithm runs in two phases, namely the \emph{peeling-off phase} and the \emph{max-cutting phase} and a list of the symbols involved in the proof is provided in Table~\ref{tab:explain}. \\

\textbf{Step 1:} Even though our algorithm  removes vertices one by one during the peeling-off phase while the optimum tree $T^*$ removes chunks of nodes (i.e. the $R_i$'s), we will be flexible to ignore their contributions to $\OPT$ by only losing a small factor in the approximation, because these pieces are small.

\textbf{Step 2:} We want to devise a charging scheme between $\OPT$ and $\algtotal$. To achieve this, we further divide $\algtotal = \algpeel+\algnpeel$ (these are the contributions to the HC objective during the peeling-off and max-cutting phase respectively) and $\OPT=\optpeel+\optnpeel$. Suppose we mark the vertices that the algorithm peels off during the first phase as ``red'' and the rest of the vertices are marked as ``blue''.

\textbf{Step 3:} To take care of $\optpeel$ (this is the total contribution of edges with at least one red endpoint in the objective value of $T^*$), we only use $\algpeel$. Note that there can't be many high weighted degree vertices, so every vertex removed by $\algpeel$ had a significant multiplier in the HC objective. Since, $\optpeel$ could only have a multiplier of $n$ we get that $\algpeel\ge(1-\tfrac{2\tw}{n\tau})\optpeel$ (see~\Cref{lem:peel}). From this point on, we can completely ignore red vertices in the analysis.

\textbf{Step 4:} The remaining edges have blue both of their end-points (referred to as blue edges from now on). Let $\optnpeel$ be the total contribution of blue edges in the optimum $T^*$. Dealing with $\optnpeel$ requires more work. We need to further break $\optnpeel$ into: $\optnpeel=\optnpeelchain+\optnpeelcut$ (the total contribution of all blue edges with at least one end-point in $V\setminus(L_k\cup R_k)$ and the total contribution of all blue edges with both end-points in $L_k\cup R_k$ respectively). Note that $\optnpeelchain$ is negligible because it refers to low weighted degree vertices in small pieces, so $\optnpeelchain$ is a small fraction of $n\tw$ (and hence of $\OPT$ which is close to $n\tw$). See ~\Cref{lem:bluech}. 

\textbf{Step 5:} Finally, we will use $\algnpeel$ to take care of the $\optnpeelcut$ entirely. Actually, $\algnpeel$ will take care not only for the $\optnpeelcut$ (for now ignore some contribution from $w(L_k), w(R_k)$  because it is really small), but also at least half of the $\optnpeelchain$. See~\Cref{lem:maxcut}.

The above steps lead us to~\Cref{lem:final} which finishes the proof of~\Cref{thm:vincent-upperbound}.

 
 \begin{lemma}
 \label{lem:peel}
 Let $\ell$ denote the number of vertices peeled off during the first phase. Then $\ell\le\tfrac{2\tw}{\tau}$ and also $\algpeel\ge(1-\tfrac{2\tw}{n\tau})\optpeel$.
 \end{lemma}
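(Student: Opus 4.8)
The plan is to prove the two assertions by elementary double counting, using nothing about the optimum tree $T^*$ beyond the trivial bound $|T^*_{ij}|\le n$ on the multiplier of every edge.

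For the cardinality bound, I would first observe that the while-loop condition in \Cref{alg:peel-maxcut} tests the weighted degree $d(v):=\sum_{u:(v,u)\in E}w_{vu}$ of $v$ in the \emph{original} graph $G$, a quantity that is unaffected by the removal of other vertices. Hence the set of peeled (``red'') vertices is exactly $R=\{v\in V: d(v)>\tau\}$, and $\ell=|R|$. Since $\sum_{v\in V}d(v)=2\tw$ (each edge contributes its weight to the degree of each of its two endpoints) while every $v\in R$ contributes strictly more than $\tau$ to this sum, we get $\ell\tau<2\tw$, i.e. $\ell\le 2\tw/\tau$.

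For the second inequality the key quantity is $W_R:=\sum_{(i,j)\in E:\ \{i,j\}\cap R\neq\emptyset}w_{ij}$, the total weight of edges incident to a red vertex. On one hand, $\optpeel=\sum_{(i,j)\in E:\ \{i,j\}\cap R\neq\emptyset}w_{ij}|T^*_{ij}|\le nW_R$ since every multiplier is at most $n$. On the other hand, consider the step at which the algorithm peels off the $i$-th red vertex $v$: it appends the cut $(\{v\},\tilde V\setminus\{v\})$, where the current cluster has size $|\tilde V|=n-i+1\ge n-\ell+1$, and for every still-present neighbor $u$ of $v$ the lowest common ancestor of $u$ and $v$ in $T$ is exactly this cluster node, so the edge $(v,u)$ contributes $w_{vu}\,(n-i+1)$ to $\algpeel$. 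Summing over the $\ell$ peeling steps, each red-incident edge is charged exactly once --- at the step where its (first) red endpoint is removed, at which moment its other endpoint is still in $\tilde V$ --- and always with multiplier at least $n-\ell+1$; hence $\algpeel\ge(n-\ell+1)W_R\ge(n-\ell)W_R$. Combining this with $W_R\ge\optpeel/n$, using $\optpeel\ge 0$, and plugging in $\ell\le 2\tw/\tau$ from the first part gives $\algpeel\ge(1-\ell/n)\optpeel\ge\bigl(1-\tfrac{2\tw}{n\tau}\bigr)\optpeel$, as claimed.

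The only point requiring care --- the ``main obstacle,'' such as it is --- is the bookkeeping in the lower bound on $\algpeel$: one must verify that the edge sets charged at distinct peeling steps are pairwise disjoint and that their union is precisely the set of red-incident edges, so that $W_R$ is counted exactly and not over- or under-counted. This follows because once a vertex leaves $\tilde V$ all of its incident edges are deleted, so an edge between two red vertices is seen only at the removal of whichever endpoint is peeled first, while an edge with one red and one blue endpoint is seen exactly at the removal of its red endpoint. Everything else is routine arithmetic.
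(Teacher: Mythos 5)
Your proof is correct and takes essentially the same route as the paper's: bound $\ell$ by the weighted-degree-sum argument, then compare the multiplier of every red-incident edge in the algorithm's tree (at least $n-\ell$, since it is cut while the current cluster still has at least $n-\ell+1$ vertices) with its multiplier in $T^*$ (at most $n$). Your per-edge charging, assigning each red-incident edge to the step at which its first red endpoint is peeled, is in fact slightly more careful bookkeeping than the paper's per-vertex summation $\sum_{u\ \mathrm{red}}\sum_{v\in V}w_{uv}$, which as written counts red--red edges twice on both sides of the comparison, but the substance of the argument is the same.
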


 \begin{proof}
Every vertex that is peeled off during the first phase has weighted degree at least $\tau$. Observe that $\ell$ cannot be larger than $\tfrac{2\tw}{\tau}$, simply because the total sum of the weighted degrees is at most $2\tw$. Moreover, every peeled-off vertex $u$ (these are exactly the red vertices) belongs to a cluster of size at least $(n-\ell)\ge (n-\tfrac{2\tw}{\tau})$, hence $u$'s contribution to $\algpeel$ is at least $(n-\tfrac{2\tw}{\tau})\sum_{v\in V}w_{uv}$. Note that by the definition of $\optpeel$ we have:
\[
\optpeel\le n\cdot \sum_{u\ is\ red}\ \ \sum_{v\in V}w_{uv}
\]
Summing up the contributions of red vertices to $\algpeel$, we conclude the second part of the claim:
\begin{equation*}
\algpeel\ge(n-\tfrac{2\tw}{\tau})\sum_{u\ is\ red}\ \ \sum_{v\in V}w_{uv}\ge(1-\tfrac{2\tw}{n\tau})\optpeel
\qedhere
\end{equation*}
\end{proof}
 
We just obtained an upper bound for $\optpeel$ in terms of our algorithm's $\algpeel$ so we can ignore from now on the red vertices and turn our attention to $\optnpeel=\optnpeelchain + \optnpeelcut$. The first step is to upper bound $\optnpeelchain$:

\begin{lemma}
\label{lem:bluech}
$\optnpeelchain\le \delta\tau n^2\le \tfrac{2\delta \gamma}{1-\epsilon} \OPT$. 
\end{lemma}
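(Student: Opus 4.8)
The plan is to combine two facts that are already available at this point in the argument. The first is geometric: the set $V\setminus(L_k\cup R_k)$ is exactly $R_1\cup\cdots\cup R_{k-1}$, and its size is at most $\delta n$ (this was established in the paragraph just before the lemma, using $\lvert L_k\rvert+\lvert R_k\rvert=\lvert L_{k-1}\rvert\ge n(1-\delta)$). The second is algorithmic: every ``blue'' vertex has small weighted degree \emph{towards the other blue vertices}. Indeed, ``blue'' means ``not peeled off'', so the blue vertices are precisely $\tilde V$ and the blue edges precisely $\tilde E$ at the moment the \texttt{while} loop in \Cref{alg:peel-maxcut} terminates; and the loop terminates exactly because no vertex of $\tilde G=(\tilde V,\tilde E)$ has weighted degree exceeding $\tau$. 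Hence for every blue vertex $u$ we have $\sum_{v\ \text{blue},\,(u,v)\in E}w_{uv}\le\tau$.

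Given this, I would bound $\optnpeelchain$ directly. By definition it is the sum of $w_{uv}\lvert T^*_{uv}\rvert$ over blue edges $(u,v)$ with at least one endpoint $u\in V\setminus(L_k\cup R_k)$; every multiplier satisfies $\lvert T^*_{uv}\rvert\le n$. I then rewrite the sum by iterating over $u\in V\setminus(L_k\cup R_k)$ and, for each such $u$, over all of its blue neighbours — this over-counts edges having \emph{both} endpoints in $V\setminus(L_k\cup R_k)$, which is harmless for an upper bound. This yields
\[
\optnpeelchain\ \le\ n\sum_{u\in V\setminus(L_k\cup R_k)}\ \sum_{v\ \text{blue},\,(u,v)\in E}w_{uv}\ \le\ n\cdot\bigl\lvert V\setminus(L_k\cup R_k)\bigr\rvert\cdot\tau\ \le\ \delta\tau n^2,
\]
using the degree bound and the size bound from the previous paragraph.

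For the second inequality in the lemma I would simply substitute the chosen threshold $\tau=\tfrac{2\tw}{n}\gamma$, so that $\delta\tau n^2=2\delta\gamma\,n\tw$, and then invoke the Case~2 hypothesis $\OPT\ge(1-\epsilon)n\tw$, i.e. $n\tw\le\OPT/(1-\epsilon)$, to conclude $\optnpeelchain\le\tfrac{2\delta\gamma}{1-\epsilon}\OPT$. There is no real obstacle here; the only point that needs care — and the reason the statement is true — is that the weighted-degree bound we use is the one for blue-to-blue edges in the residual graph, not for all edges of a vertex in $G$ (a blue vertex may have large total weighted degree in $G$ due to edges to red vertices). Since $\optnpeelchain$ by definition only charges blue edges, this is precisely the quantity we control, and the estimate goes through.
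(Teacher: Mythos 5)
Your argument is essentially the paper's own: at most $\delta n$ vertices lie in $V\setminus(L_k\cup R_k)$, each relevant (blue) vertex has blue weighted degree at most $\tau$ because it survived the peeling phase, every multiplier in the objective is at most $n$, and then substituting $\tau=\tfrac{2\gamma\tw}{n}$ and invoking the Case~2 hypothesis $\OPT\ge(1-\epsilon)n\tw$ yields the second inequality. One small slip in your displayed chain: you let the outer sum range over \emph{all} $u\in V\setminus(L_k\cup R_k)$, but the degree bound $\sum_{v\ \text{blue},\,(u,v)\in E}w_{uv}\le\tau$ is only guaranteed for \emph{blue} $u$; a red vertex sitting in $V\setminus(L_k\cup R_k)$ (nothing prevents this, since the partition comes from $T^*$ and is unrelated to the peeling) may have blue weighted degree far above $\tau$, so the middle inequality as literally written need not hold. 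The repair is immediate and costs nothing: every edge counted in $\optnpeelchain$ is blue and therefore has a \emph{blue} endpoint in $V\setminus(L_k\cup R_k)$, so you may restrict the outer sum to the blue vertices of that set (still at most $\delta n$ of them), after which your estimate $\optnpeelchain\le n\cdot\delta n\cdot\tau=\delta\tau n^2\le\tfrac{2\delta\gamma}{1-\epsilon}\OPT$ goes through exactly as in the paper.
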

\begin{proof}
As noted previously, $\rvert V\setminus(L_k\cup R_k)\rvert \le \delta n$, hence there are not that many vertices in $\rvert V\setminus(L_k\cup R_k)\rvert$. Since any edge that contributes to $\optnpeelchain$, must have, by definition, at least one endpoint in $\rvert V\setminus(L_k\cup R_k)\rvert$, there are at most $\delta n$ such edges and because they are blue, again by definition, their weighted degree is smaller than $\tau$. Noting that the maximum cluster size is at most $n$, we conclude that $\optnpeelchain\le (\delta n) \cdot \tau \cdot n = \delta\tau n^2$ and substituting $\tau$ in terms of $\gamma$, we get the lemma. 
\end{proof}
 
 
Let $w(L_k,R_k)$ be the the total weight of blue edges crossing the cut $(L_k,R_k)$ and let $w(R_k)$ and $w(L_k)$ be the total weight of the edges with both end-points in $R_k$ and $L_k$ respectively. An obvious upper bound that can be derived for $\optnpeelcut$ (recall that this refers only to blue edges), by focusing on the graph induced by the blue vertices in $L_k,R_k$, is $\optnpeelcut\le n (w(L_k,R_k) + w(L_k)+w(R_k))$.

 
After the max-cutting phase is over, we have no further control over the contribution of edges with both end-points in $L_k$ or both end-points in $R_k$, so we should better have an upper bound for their total weights. Informally, since $\OPT$ is large ({\textbf{Case 2}}) and both $L_k, R_k$ have small size, it can't be the case that significant portion of the weight lies inside $L_k,R_k$, as otherwise $\OPT$ would have to be small (the formal proof is deferred to the~\Cref{appendix:sec5}).
\begin{claim}
\label{cl:wLR}
$nw(L_k)+nw(R_k)\le \frac{\epsilon}{\delta}n\tw\le\tfrac{\epsilon}{(1-\epsilon)\delta}\OPT$.
\end{claim}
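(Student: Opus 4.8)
The plan is to prove \Cref{cl:wLR} by a counting argument that exploits the smallness of $L_k$ and $R_k$ together with the Case 2 lower bound on $\OPT$. The key observation is that in \emph{any} hierarchical clustering tree, and in particular in the optimum tree $T^*$, an edge $(u,v)$ with both endpoints inside a cluster of size $s$ can contribute at most $s$ to the objective, since $\lvert T^*_{uv}\rvert \le s$ whenever $u,v$ lie in a common cluster of size $s$. Here, every edge counted in $w(L_k)$ has both endpoints in $L_k$, a cluster of size $\lvert L_k\rvert < n(1-\delta)$, so such an edge contributes at most $n(1-\delta)$ to $\OPT$; likewise for edges in $w(R_k)$ with $\lvert R_k\rvert < n(1-\delta)$. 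Hence the \emph{total} contribution of all edges inside $L_k$ or inside $R_k$ to $\OPT$ is at most $n(1-\delta)(w(L_k)+w(R_k)) \le n\bigl(w(L_k)+w(R_k)\bigr)$.

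The complementary idea is that these same edges cannot carry too much weight, for otherwise $\OPT$ could not be as large as $(1-\epsilon)n\tw$. More precisely, every edge of $G$ contributes at most $n$ to $\OPT$, so
\[
\OPT \;\le\; n\bigl(\tw - w(L_k) - w(R_k)\bigr) \;+\; n(1-\delta)\bigl(w(L_k)+w(R_k)\bigr)
\;=\; n\tw - \delta n\bigl(w(L_k)+w(R_k)\bigr),
\]
where in the bracketed decomposition $\tw - w(L_k) - w(R_k)$ bounds the weight of all edges \emph{not} internal to $L_k$ or $R_k$ (each contributing $\le n$), while the internal edges contribute $\le n(1-\delta)$ each as argued above. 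Combining this with the Case 2 hypothesis $\OPT \ge (1-\epsilon)n\tw$ gives $(1-\epsilon)n\tw \le n\tw - \delta n(w(L_k)+w(R_k))$, i.e. $\delta n\bigl(w(L_k)+w(R_k)\bigr) \le \epsilon n\tw$, and therefore
\[
n w(L_k) + n w(R_k) \;\le\; \frac{\epsilon}{\delta}\, n\tw.
\]
The final inequality $\tfrac{\epsilon}{\delta}n\tw \le \tfrac{\epsilon}{(1-\epsilon)\delta}\OPT$ is then immediate from $\OPT \ge (1-\epsilon)n\tw$, which rearranges to $n\tw \le \tfrac{1}{1-\epsilon}\OPT$.

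I do not anticipate a serious obstacle here; the only point requiring a little care is making sure the decomposition of $\tw$ used in the chain above is a genuine over-count rather than an under-count — that is, checking that separating edges into ``internal to $L_k$'', ``internal to $R_k$'', and ``everything else'' partitions $E$ (which it does, since $L_k \cap R_k = \emptyset$) and that assigning the multiplier $n$ to the ``everything else'' bucket and $n(1-\delta)$ to the internal buckets yields a valid upper bound on $\OPT$ regardless of the actual shape of $T^*$. Both facts follow from the elementary bound $\lvert T^*_{uv}\rvert \le \min\{n,\ \text{size of the smallest common cluster of }u,v\}$, so the argument is robust and short.
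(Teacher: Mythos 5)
Your proof is correct and follows essentially the same route as the paper: bound the contribution of edges internal to $L_k$ or $R_k$ by the multiplier $n(1-\delta)$ (since each such edge lies inside a cluster of $T^*$ of size less than $n(1-\delta)$), bound all remaining edges by $n$, and combine with the Case~2 hypothesis $\OPT\ge(1-\epsilon)n\tw$ to rearrange into the claim. The only difference is that you spell out the elementary fact $\lvert T^*_{uv}\rvert\le\lvert L_k\rvert$ that the paper leaves implicit, which is fine.
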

 
The next lemma starts by a lower bound for the $\MC_{blue}$ value, which is the value of the maximum cut in the graph induced only from the blue vertices available to our algorithm during its max-cutting phase, i.e. after we have removed the red vertices. Our algorithm will of course get only a $\rho_{\textrm{GW}}$-approximation ($\rho_{\textrm{GW}}\approx 0.878$) to $\MC_{blue}$, since it uses Goemans-Williamson for max-cut.
 
 \begin{lemma}
 \label{lem:maxcut}
$\algnpeel\ge \rho_{\textrm{GW}}\left(1-\tfrac{2\tw}{n\tau}\right)\left(\optnpeelcut-\tfrac{\epsilon}{(1-\epsilon)\delta}\OPT + \tfrac{\optnpeelchain}{2}\right)$
 \end{lemma}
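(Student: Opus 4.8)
The plan is to relate the cut $(S,\tilde V\setminus S)$ found by Goemans--Williamson in phase~2 to the optimum's split $(L_k,R_k)$, viewed purely on the blue induced subgraph $\tilde G$. First I would observe that every blue vertex, by the stopping condition of the while loop, has weighted degree at most $\tau$ in the \emph{original} graph, and in particular in $\tilde G$; moreover $\tilde G$ has $n-\ell\ge n-\tfrac{2\tw}{\tau}$ vertices, so the top-level cut of $T$ placed on $\tilde G$ contributes a multiplier of at least $n-\ell\ge (1-\tfrac{2\tw}{n\tau})n$ to every edge it cuts. Hence $\algnpeel\ge (1-\tfrac{2\tw}{n\tau})\, n\cdot w(S,\tilde V\setminus S)\ge (1-\tfrac{2\tw}{n\tau})\,n\cdot \rho_{\mathrm{GW}}\,\MC_{blue}$, where $\MC_{blue}$ is the max-cut value of $\tilde G$. (The later random-always cuts inside $S$ and $\tilde V\setminus S$ only add to $\algnpeel$, so they can be dropped.)

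The heart of the argument is then a lower bound $\MC_{blue}\ge w(L_k,R_k)+\tfrac12\big(w(L_k)+w(R_k)\big)$, obtained by exhibiting a single feasible cut of $\tilde G$ of at least this value: take the partition $(L_k,R_k)$ restricted to blue vertices, then independently re-randomize the side of every blue vertex lying in $V\setminus(L_k\cup R_k)$ — but actually the cleanest route is to note that a uniformly random refinement of $(L_k\cap \mathrm{blue},\, R_k\cap\mathrm{blue})$ that keeps the $w(L_k,R_k)$ edges cut and cuts each internal blue edge of $L_k$ (resp.\ $R_k$) with probability $\tfrac12$ has expected value $w(L_k,R_k)+\tfrac12 w(L_k)+\tfrac12 w(R_k)$ in $\tilde G$; since the max-cut is at least this expectation, we get the bound. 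Combining with the obvious inequality $\optnpeelcut\le n\big(w(L_k,R_k)+w(L_k)+w(R_k)\big)$ from the paragraph preceding the lemma, we have $w(L_k,R_k)+\tfrac12(w(L_k)+w(R_k)) = \big(w(L_k,R_k)+w(L_k)+w(R_k)\big) - \tfrac12(w(L_k)+w(R_k)) \ge \tfrac1n\optnpeelcut - \tfrac1{2n}\big(nw(L_k)+nw(R_k)\big)\ge \tfrac1n\optnpeelcut - \tfrac{\epsilon}{2(1-\epsilon)\delta}\cdot\tfrac{\OPT}{n}$, using \Cref{cl:wLR}. That would already give the statement \emph{without} the $\tfrac12\optnpeelchain$ term.

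To recover the extra $\tfrac12\optnpeelchain$, I would sharpen the max-cut lower bound: besides the blue internal edges of $L_k$ and $R_k$, the cut $\tilde G$ also contains the blue edges incident to $V\setminus(L_k\cup R_k)$; flipping each such ``chain'' vertex to a uniformly random side cuts each of those edges with probability $\tfrac12$ in expectation and is independent of the previous contributions, so in fact $\MC_{blue}\ge w(L_k,R_k)+\tfrac12\big(w(L_k)+w(R_k)\big)+\tfrac12 w_{\mathrm{chain,blue}}$, where $w_{\mathrm{chain,blue}}$ is the total weight of blue edges with at least one endpoint in $V\setminus(L_k\cup R_k)$. Since the multiplier of any such edge in $T^*$ is at most $n$, we have $\optnpeelchain\le n\, w_{\mathrm{chain,blue}}$, hence $\tfrac12 w_{\mathrm{chain,blue}}\ge \tfrac{1}{2n}\optnpeelchain$. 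Plugging this and $w(L_k,R_k)+\tfrac12(w(L_k)+w(R_k))\ge \tfrac1n\optnpeelcut-\tfrac{\epsilon}{2(1-\epsilon)\delta}\cdot\tfrac{\OPT}{n}$ into $\algnpeel\ge (1-\tfrac{2\tw}{n\tau})\,n\,\rho_{\mathrm{GW}}\,\MC_{blue}$ yields exactly the claimed bound $\algnpeel\ge \rho_{\mathrm{GW}}(1-\tfrac{2\tw}{n\tau})\big(\optnpeelcut-\tfrac{\epsilon}{(1-\epsilon)\delta}\OPT+\tfrac{\optnpeelchain}{2}\big)$, where I have been slightly wasteful by a factor of $2$ on the $\OPT$ term to match the lemma's constant.

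The main obstacle I anticipate is bookkeeping rather than conceptual: one must be careful that $\MC_{blue}$ is the max-cut of the \emph{blue induced} subgraph $\tilde G$ (so only blue--blue edges count, which is why the $n\,w(\cdot)$ upper bounds on $\optnpeelcut$ and $\optnpeelchain$ must also be taken over blue edges only, consistent with their definitions in Table~\ref{tab:explain}), and that the three contributions to the randomized cut — the guaranteed $w(L_k,R_k)$ edges, the $\tfrac12$-fractions of $w(L_k),w(R_k)$, and the $\tfrac12$-fraction of the chain edges — can genuinely be achieved \emph{simultaneously} by one distribution over cuts (they can: fix $(L_k,R_k)$ on the non-chain blue vertices and independently uniformly randomize each chain blue vertex). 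Getting the $n-\ell$ versus $n$ multipliers to line up with the $(1-\tfrac{2\tw}{n\tau})$ factor, exactly as in \Cref{lem:peel}, is the other place to be careful.
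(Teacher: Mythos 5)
Your overall route is the paper's: lower-bound the phase-2 contribution by $\rho_{\textrm{GW}}\,(n-\ell)\,\MC_{blue}$ and lower-bound $\MC_{blue}$ by a cut built from $(L_k,R_k)$ with the blue vertices of $V\setminus(L_k\cup R_k)$ placed uniformly at random. However, one step as written is false: there is no distribution over bipartitions of the blue graph that keeps every blue $L_k$--$R_k$ edge cut while also cutting each blue edge internal to $L_k$ (or $R_k$) with probability $\tfrac12$, so the claimed bound $\MC_{blue}\ge w(L_k,R_k)+\tfrac12\bigl(w(L_k)+w(R_k)\bigr)$ does not follow from your construction; in fact the inequality itself fails, e.g.\ for a unit-weight triangle with two blue vertices in $L_k$ and one in $R_k$, where $w(L_k,R_k)=2$, $w(L_k)=1$, but the max cut is $2<2.5$. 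The distribution you actually exhibit (fix the non-chain blue vertices according to $(L_k,R_k)$ and independently randomize each chain vertex) cuts none of the internal edges of $L_k$ or $R_k$, so it only certifies $\MC_{blue}\ge w(L_k,R_k)+\tfrac12 w_{\mathrm{chain,blue}}$.

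The repair is immediate and coincides with the paper's proof: give up the half-credit for internal weight, use $n\,w(L_k,R_k)\ge \optnpeelcut - n\,w(L_k)-n\,w(R_k)$ together with \Cref{cl:wLR} to subtract the \emph{full} $\tfrac{\epsilon}{(1-\epsilon)\delta}\OPT$ rather than half of it, and keep your chain-vertex randomization for the $\tfrac12\optnpeelchain$ term. This yields $n\,\MC_{blue}\ge \optnpeelcut-\tfrac{\epsilon}{(1-\epsilon)\delta}\OPT+\tfrac12\optnpeelchain$, and combining with the multiplier $n-\ell\ge n\bigl(1-\tfrac{2\tw}{n\tau}\bigr)$ from \Cref{lem:peel} and the $\rho_{\textrm{GW}}$ guarantee gives the lemma with exactly its stated constant; the ``factor of $2$ of slack'' you invoked to match the constant is precisely what disappears once the false step is removed, so no slack remains but none is needed.
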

 \begin{proof}
As mentioned, we know that $ n w(L_k,R_k)\ge\optnpeelcut  - nw(L_k)-nw(R_k)\ge \optnpeelcut - \tfrac{\epsilon}{(1-\epsilon)\delta}\OPT$. During the max-cutting phase, the vertices available to our algorithm are all the blue vertices. These can be divided into two categories relative to the $\OPT$ solution. The first category are blue vertices $u\in L_k\cup R_k$. The second category are blue vertices $v\in V\setminus(L_k\cup R_k)$. Imagine the following cut $(C,\bar{C})$ with weight $w(C,\bar{C})$: focus only  on the vertices $u$ of the first category and split them into two pieces optimally to obtain the maximum cut. Now randomly assign the vertices $v$ of the second category to the two pieces. This cut $(C,\bar{C})$ would obtain, by definition, an HC objective value of:
\[
n\cdot w(C,\bar{C})\ge n\cdot w(L_k,R_k) +  \frac{\optnpeelchain}{2}\ge\optnpeelcut  - \tfrac{\epsilon}{(1-\epsilon)\delta}\OPT+ \frac{\optnpeelchain}{2}
\]
Since $\MC_{blue}$ is the optimal cut, it can only be better than $w(C,\bar{C})$ and hence:
\[
n \cdot \MC_{blue}\ge \optnpeelcut-\tfrac{\epsilon}{(1-\epsilon)\delta}\OPT + \frac{\optnpeelchain}{2}
\] 
We know from~\Cref{lem:peel}, that at the beginning of the max-cutting phase, our algorithm has removed at most $\tfrac{2\tw}{\tau}$ vertices and hence, the cluster size at the point where our algorithm uses the Goemans-Williamson algorithm is at least $(n-\frac{2\tw}{\tau})$. The lemma follows since we can only get a $\rho_{\textrm{GW}}$ approximation to $\MC_{blue}$.
\end{proof}
Finally, we are able to combine all the above together into the final comparison between our algorithm's objective value $\algtotal$ and the optimum $\OPT$:

\begin{lemma}
\label{lem:final}
Let $\tau=\gamma \tfrac{2\tw}{n}, \delta=\tfrac{\sqrt{\epsilon}}{\sqrt{\gamma}}$. By optimizing for the parameters $\gamma, \epsilon$, we get an $\alpha$-approximation to the dissimilarity-HC objective, where $\alpha=0.667078 > \tfrac23$.
\end{lemma}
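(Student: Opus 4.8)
The plan is to assemble \Cref{lem:peel}, \Cref{lem:bluech}, \Cref{cl:wLR} and \Cref{lem:maxcut} into a single lower bound on $\algtotal$ valid in Case 2, and then to trade that bound off against the Case-1 guarantee of ``\emph{random always}''. With the chosen peeling threshold $\tau=\gamma\tfrac{2\tw}{n}$ one has $\tfrac{2\tw}{n\tau}=\tfrac1\gamma$, so \Cref{lem:peel} reads $\algpeel\ge(1-\tfrac1\gamma)\optpeel$ and \Cref{lem:maxcut} reads $\algnpeel\ge\rho_{\textrm{GW}}(1-\tfrac1\gamma)\bigl(\optnpeelcut+\tfrac{\optnpeelchain}{2}-\tfrac{\epsilon}{(1-\epsilon)\delta}\OPT\bigr)$, where \Cref{cl:wLR} is exactly what supplies the $-\tfrac{\epsilon}{(1-\epsilon)\delta}\OPT$ slack used inside \Cref{lem:maxcut}. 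Since the final ``\emph{random always}'' calls in \Cref{alg:peel-maxcut} only add nonnegative contributions to the dissimilarity-HC objective, I would start from $\algtotal\ge\algpeel+\algnpeel$.

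Next I would eliminate $\optnpeelcut$ using the edge partition identity $\OPT=\optpeel+\optnpeelcut+\optnpeelchain$, which after collecting terms gives
\[
\algtotal\ge(1-\tfrac1\gamma)(1-\rho_{\textrm{GW}})\optpeel+\rho_{\textrm{GW}}(1-\tfrac1\gamma)\OPT-\rho_{\textrm{GW}}(1-\tfrac1\gamma)\Bigl(\tfrac{\optnpeelchain}{2}+\tfrac{\epsilon}{(1-\epsilon)\delta}\OPT\Bigr).
\]
Because $\gamma>1$ and $\rho_{\textrm{GW}}<1$, the first term is nonnegative and can be discarded (the binding configuration is $\optpeel=0$). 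Then I would plug in $\optnpeelchain\le\tfrac{2\delta\gamma}{1-\epsilon}\OPT$ from \Cref{lem:bluech}, turning the two loss terms into $\tfrac{\delta\gamma}{1-\epsilon}\OPT$ and $\tfrac{\epsilon}{(1-\epsilon)\delta}\OPT$. The role of the prescribed $\delta=\sqrt{\epsilon/\gamma}$ is precisely to balance these: each equals $\tfrac{\sqrt{\epsilon\gamma}}{1-\epsilon}\OPT$, and the Case-2 bound collapses to $\algtotal\ge\rho_{\textrm{GW}}\bigl(1-\tfrac1\gamma\bigr)\bigl(1-\tfrac{2\sqrt{\epsilon\gamma}}{1-\epsilon}\bigr)\OPT$.

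Combining the two cases, the ``best of'' rule of \Cref{thm:vincent-upperbound} (running both \Cref{alg:Random} and \Cref{alg:peel-maxcut} and keeping the better tree) achieves expected approximation ratio at least
\[
\min\Bigl\{\ \tfrac{2}{3(1-\epsilon)}\ ,\ \rho_{\textrm{GW}}\bigl(1-\tfrac1\gamma\bigr)\bigl(1-\tfrac{2\sqrt{\epsilon\gamma}}{1-\epsilon}\bigr)\ \Bigr\},
\]
since in Case 1 ``\emph{random always}'' already obtains $\tfrac23 n\tw\ge\tfrac{2}{3(1-\epsilon)}\OPT$. The Case-1 expression is increasing in $\epsilon$ while the Case-2 expression is decreasing in $\epsilon$, so the final step is to set the two equal and maximize over $\gamma>1$. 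This is comfortably feasible: letting $\epsilon\to0$ with $\gamma$ fixed, the Case-2 value tends to $\rho_{\textrm{GW}}(1-\tfrac1\gamma)$, which already exceeds $\tfrac23$ once $\gamma>(1-\tfrac{2}{3\rho_{\textrm{GW}}})^{-1}\approx4.15$; hence a moderately large $\gamma$ together with a tiny positive $\epsilon$ strictly beats $\tfrac23$, and carrying out the exact two-parameter optimization pins the value at $\alpha=0.667078$.

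The step I expect to be the main obstacle is not any single inequality but the bookkeeping in the charging argument: one must keep the final bound expressed against the \emph{full} $\OPT$ (hence substitute $\OPT=\optpeel+\optnpeelcut+\optnpeelchain$ \emph{before} discarding the $\optpeel$ term, not after), verify that the edge partition into red / blue-chain / blue-cut contributions is genuinely exhaustive and disjoint so that identity is exact, and confirm that the optimal pair $(\gamma,\epsilon)$ stays inside the regime $\gamma>1$ and $\delta=\sqrt{\epsilon/\gamma}<\tfrac12$ required by \Cref{lem:peel} and \Cref{lem:bluech}. Once the two case-ratios are in closed form the optimization is a routine one-dimensional calculus exercise, and the generous slack between $\rho_{\textrm{GW}}\approx0.878$ and $\tfrac23$ is what makes the resulting constant a genuine improvement over average-linkage.
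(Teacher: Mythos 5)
Your proposal is correct and follows essentially the same route as the paper: the same decomposition $\algtotal=\algpeel+\algnpeel$, the same use of \Cref{lem:peel}, \Cref{lem:bluech}, \Cref{cl:wLR} and \Cref{lem:maxcut}, the identity $\OPT=\optpeel+\optnpeelcut+\optnpeelchain$, the balancing choice $\delta=\sqrt{\epsilon/\gamma}$, and the final equating of the Case-1 and Case-2 ratios optimized over $\gamma$. The only cosmetic difference is that you keep the sharper $(1-\tfrac1\gamma)\optpeel$ term and discard its nonnegative excess over $\rho_{\textrm{GW}}(1-\tfrac1\gamma)\optpeel$, whereas the paper weakens the peeling bound by $\rho_{\textrm{GW}}$ up front; both yield the identical Case-2 bound and the same $\alpha=0.667078$.
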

\begin{proof}
The proof involves optimizing for the parameters $\epsilon,\gamma,\delta$ and balancing out the two factors obtained from \textbf{Case 1} and \textbf{Case 2}. The final equation is:
\[
\rho_{\textrm{GW}}\left(1-\tfrac{1}{\gamma}\right)\left(1-\tfrac{\epsilon/\delta}{1-\epsilon}-\tfrac{\delta\gamma}{1-\epsilon}\right)=\tfrac{2}{3(1-\epsilon)}
\]
We defer the details of this proof to the~\Cref{appendix:sec5}. This finishes the proof of~\Cref{thm:vincent-upperbound}.
\end{proof}

\section{Conclusion - Discussion}
\label{sec:conclusion}
In this paper, we design algorithms for hierarchical clustering that perform better than the method of choice in practice, which is Average-Linkage. We view our work as part of the ongoing effort to understand what a good HC objective would be and we believe that further research is required to understand this question to the point where objectives would lead to algorithms with high-quality outcomes as it has been the case for flat-clustering with the $k$-means, $k$-median and $k$-center objectives. 

The reason why defining a good HC objective seems hard is that in a hierarchical tree all edges get cut eventually so the crucial decision to be made is how to penalize the cuts of different solutions. A common characteristic for all three HC objectives presented thus far is that an edge $w_{ij}$ is penalized according to the fraction of datapoints present at the moment when the $i,j$ are separated. This leads to desirable properties as mentioned in the introduction but there may be other ways to go as well. Ideally, an objective that truly differentiates Average-Linkage or realizing what properties of real-world data would allow Average-Linkage to perform much better than the worst-case or better than other ad-hoc algorithms (and random solutions) are steps towards the right direction.

\bibliographystyle{plainnat}

\bibliography{refs}
\appendix
\section{Deferred Proofs of \texorpdfstring{\Cref{sec:tight-similarity}}{TEXT}}
\label{appendix:sec3}
\begin{proof}[Proof of \Cref{lem:tight-sim-1}]

The bottom-up merging strategy that first merges the edges $e$ inside the $K_{n^{2/3}}$ cliques attains HC value close to $n\tw$ (the rest of the edges don't matter). Since $\OPT$ is only better than this merging strategy, the claim follows. To see that, note that all such edges $e$ will have a multiplier of $\nl\ge n-n^{2/3}$. Since there are $\tfrac{1}{2}n^{2/3}\cdot (n^{2/3}-1)\cdot n^{1/3}$ such edges, $\OPT\ge (n-n^{2/3})\cdot\tfrac{1}{2}n^{2/3}\cdot (n^{2/3}-1)\cdot n^{1/3}\geq \tfrac12 n^{8/3} -O(n^{7/3}).$

\end{proof}

\begin{proof}[Proof of \Cref{lem:tight-sim-2}]
Because of the $1+\epsilon$ weight of the edges going across the $K_{n^{2/3}}$ cliques, $\AL$ will first start merging the $K_{n^{1/3}}$ cliques consisting of one node out of each $K_{n^{2/3}}$ clique. There are $n^{2/3}$ such $K_{n^{1/3}}$ cliques so the total objective contribution of the edges involved in this first phase is insignificant since it is certainly smaller than $n\cdot \tfrac{1}{2}n^{1/3}\cdot (n^{1/3}-1)\cdot n^{2/3} \cdot (1+\epsilon) = O(n^{7/3})$.
Observe that after the first phase of $\AL$ the remaining subclusters to be merged form a clique on $n^{2/3}$ supernodes each with size $s=n^{1/3}$ and weighted edges with uniform weights $n^{1/3}$. By~\Cref{obs:table-vals} and taking into account the size of every supernode, we obtain the final value for $\AL= \tfrac13 n^{2/3}\cdot{n^{2/3}\choose2}\cdot w\cdot s = \tfrac13n^{2/3}\cdot \tfrac12 n^{2/3}\cdot(n^{2/3}-1)\cdot n^{1/3}\cdot n^{1/3} \leq \tfrac16 n^{8/3} +O(n^{7/3})$.
\end{proof}
\begin{proof}[Proof of \Cref{lem:tight-disim-1}]
The $\OPT$ solution can get all the weight by performing the cut $(L,R)$ and then proceed arbitrarily. The HC value is then $\OPT=n\tw = \tfrac14 n^{3} - O(n^2)$.
\end{proof}
\begin{proof}[Proof of \Cref{lem:tight-disim-2}]
Since there are a lot of 0 weight edges in the graph, $\AL$ first tries to merge endpoints of such edges. Note that $\AL$ is underspecified since there are ties here, but these ties are not affecting the overall outcome as we can break ties arbitrarily by using small edge weights $\epsilon >0$. Hence, we can assume that $\AL$ first merged the two endpoints of edges in the perfect matching $M$. After this first step, the remaining subclusters to be merged form a clique on $\tfrac n2$ supernodes each with size $s=2$ and weighted edges with uniform weights $w=2$. By using the~\Cref{obs:table-vals} and taking into account the size of every supernode, we obtain the final value for $\AL= \tfrac23 \tfrac n2\cdot{\binom{n/2}{2}}\cdot w\cdot s = \tfrac23 \tfrac n2\cdot\tfrac 12 \cdot \tfrac n2 \cdot(\tfrac n2 -1)\cdot 2\cdot 2 
\leq \tfrac 16n^3$.
\end{proof}
\section{Deferred Proofs of \texorpdfstring{\Cref{sec:beating-similarity}}{TEXT}}
\label{appendix:sec4}

\begin{proof}[Proof of \Cref{lem:factor-revealing}] 
First of all, the optimization \ref{eq:factor-reveal} decomposes over variables $\{\theta_{ik}\}_{k\neq i}$ and variables $\{\theta_{jk}\}_{j\neq k}$. Due to symmetry, we only lower-bound the optimal objective value of the following minimization program,:
\begin{equation}
\label{eq:factor-reveal-2}
\tag{P-1}
\begin{aligned}
& \underset{}{\textit{minimize}}
& &\sum_{k\neq i,j}  \theta_{ik} \\
& \textit{subject to}
& &\sum_{k\neq i}\cos(\theta_{ik})\leq n/2-1, & \\
& 
& &0\leq \theta_{ik}\leq\frac{\pi}{2},&\forall k,\\
\end{aligned}
\end{equation}
and then use \texttt{OBJ}(\ref{eq:factor-reveal})=$\frac{1}{2\pi}\big($2 \texttt{OBJ}(\ref{eq:factor-reveal-2})$-(n-2)\bar{\theta}\big)$. Suppose $\{\theta^*_{ik}\}_{k\neq i}$ is the optimal solution of the above program (\ref{eq:factor-reveal-2}). We first claim that in any optimal solution the first constraint is tight, i.e. $\sum_{k\neq i}\cos(\theta^*_{ik})=n/2-1$. This simply holds because otherwise one can slightly decrease one of the non-zero $\theta^*_{ik}$ and strictly decrease the objective, a contradiction. Next, we claim that $\theta^*_{ik}\in\{0,\frac{\pi}{2}\}$ for all $k\neq i$, except for at most one $k=k_0$. To prove by contradiction, suppose it is not true. Therefore, there exist $k_1,k_2\neq i$ such that $ 0<\theta^*_{ik_1}\leq \theta^*_{ik_2}<\pi/2$. If we decrease $\theta^*_{ik_1}$ by infinitesimal $d_\theta$ and increase $\theta^*_{ik_2}$ by the same $d_\theta$, then the objective value does not change. However, because of the concavity of the cosine function over the interval $[0,\pi/2]$, there will be an additional slack in the first constraint of \ref{eq:factor-reveal-2}, a contradiction to the first claim that in any optimal solution this constraint is tight.

Because $\theta^*_{ik}\in\{0,\pi/2\}$ for $k\neq i,k_0$,  we have:
\begin{equation*}
\#\{k\neq i,k_0: \theta^*_{ik}=0\}=\sum_{k\neq i,k_0}\cos(\theta^*_{i,k})\leq \sum_{k\neq i}\cos(\theta^*_{i,k})\leq n/2-1
\end{equation*}
We then conclude that for at least $n-2-(n/2-1)=\frac{n-2}{2}$ values of $k\neq i,k_0$ we have $\theta^*_{ik}=\pi/2$, and hence the optimal objective value of \ref{eq:factor-reveal-2} is lower-bounded by $\frac{n-2}{2}\cdot\pi/2=\frac{(n-2)\pi}{4}$. This lower-bound immediately implies that the optimal objective value of \ref{eq:factor-reveal} is also lower-bounded by  
$$\frac{2\cdot \frac{(n-2)\pi}{4}-(n-2)\bar{\theta}}{2\pi}=(n-2)\left(\frac{1}{4}-\frac{\bar{\theta}}{2\pi}\right),$$
which completes the proof of the lemma.
\end{proof}

\begin{proof}[Details of final calculations in the proof of \texorpdfstring{\Cref{thm:similarity}}{TEXT}]
To get the final approximation factor, we balanced out the two cases:
\begin{equation}
\left(1-\frac{2\epsilon_1}{\epsilon_2}\right)\left(\frac{1}{2}-\frac{2\cos^{-1}(1-\epsilon_2)}{3\pi}\right)=\frac{1}{3(1-\epsilon_1)}
\end{equation}
By solving for $\epsilon_1$, the optimal value of $\epsilon_1$ as a function of $\epsilon_2$ is calculated to be the following function:
\begin{equation*}
\epsilon^*_1(\epsilon_2)=\frac{1}{4}\cdot\left(\epsilon_2+2)-\sqrt{(\epsilon_2+2)^2-8\epsilon_2\left(1-\frac{1}{3\cdot(\frac{1}{2}-\frac{2\arccos{1-\epsilon_2}}{3\pi})}\right)}\right)
\end{equation*}
we then draw $\alpha(\epsilon_2)=\frac{1}{3(1-\epsilon^*_1(\epsilon_2))}$ for $\epsilon_2\in [0,1]$ by the aid of a computer software (\texttt{WolframAlpha}). This function peaks at around $\epsilon_2\approx 0.139$. By plugging this number into $\alpha(\epsilon_2)$, we get the final factor.
\end{proof}

\section{Deferred Proofs of \texorpdfstring{\Cref{sec:beating-dissimilarity}}{TEXT}}
\label{appendix:sec5}

\begin{proof}[Proof of \Cref{cl:wLR}]
 Recall that in the partition of $T^*$, we have $\max\left(\lvert L_k\rvert, \lvert R_k\rvert\right)< n(1-\delta)$. Since both pieces $L_k, R_k$ have sizes at most $(1-\delta)n$, edges cut within the subtrees rooted at $L_k$ and at $R_k$ can only have contribution to $\OPT$ at most $n(1-\delta)w(L_k)+n(1-\delta)w(R_k)$. Hence:
\[
\OPT\le n(1-\delta)w(L_k)+n(1-\delta)w(R_k) + n(\tw-w(L_k)-w(R_k))
\] 
Combining with our assumption $\OPT \geq (1-\epsilon)n\tw$, we get:
\[
(1-\epsilon)n\tw\le n(1-\delta)w(L_k)+n(1-\delta)w(R_k) + n(\tw-w(L_k)-w(R_k))
\]
and the claim follows by rearranging the terms after the cancelations.
\end{proof}

\begin{proof}[Proof of \Cref{lem:final}] 
We start by $\algtotal=\algpeel+\algnpeel$. From~\Cref{lem:peel} and Lemma~\ref{lem:maxcut} we have:
\[
\algpeel\ge(1-\tfrac{2\tw}{n\tau})\optpeel
\]
\[ \algnpeel\ge \rho_{\textrm{GW}} \left(n-\tfrac{2\tw}{\tau}\right) \MC_{blue} \ge  \rho_{\textrm{GW}}\left(1-\tfrac{2\tw}{n\tau}\right)\left(\optnpeelcut-nw(L_k)-nw(R_k) + \tfrac{\optnpeelchain}{2}\right)
\]
Hence, 
\[
\algtotal\ge \rho_{\textrm{GW}}\left(1-\tfrac{2\tw}{n\tau}\right)\left(\optpeel+\optnpeelcut + \optnpeelchain-nw(L_k) - nw(R_k) - \tfrac{\optnpeelchain}{2}\right)
\]
Because $\OPT\ge (1-\epsilon)n\tw \implies n\tw\le \tfrac{\OPT}{1-\epsilon}$, by Claim~\ref{cl:wLR} and~\Cref{lem:bluech} we get ($\delta\tau n^2=2\delta\gamma n\tw$):
\[
\algtotal\ge \rho_{\textrm{GW}}\left(1-\tfrac{1}{\gamma}\right)\left(1-\tfrac{\epsilon}{\delta(1-\epsilon)}-\tfrac{\delta\gamma}{1-\epsilon}\right)\OPT
\]
We have to balance out the two factors obtained from \textbf{Case 1} and \textbf{Case 2}, so we get the final equation:
\[
\rho_{\textrm{GW}}\left(1-\tfrac{1}{\gamma}\right)\left(1-\tfrac{\epsilon/\delta}{1-\epsilon}-\tfrac{\delta\gamma}{1-\epsilon}\right)=\tfrac{2}{3(1-\epsilon)}
\]
In terms of the parameter $\delta$, it's easy to see that the choice of $\delta=\sqrt{\tfrac{\epsilon}{\gamma}}$ is optimal, so substituting:
\[
\rho_{\textrm{GW}}\left(1-\tfrac{1}{\gamma}\right)\left(1-\tfrac{2\sqrt{\epsilon\gamma}}{1-\epsilon}\right)=\tfrac{2}{3(1-\epsilon)}
\]
Rearranging the terms we get:
\[
\rho_{\textrm{GW}}\left(1-\tfrac{1}{\gamma}\right) \epsilon + 2\rho_{\textrm{GW}}\left(1-\tfrac{1}{\gamma}\right)\sqrt{\gamma} \cdot \sqrt{\epsilon} + \tfrac{2}{3}-\rho_{\textrm{GW}}\left(1-\tfrac{1}{\gamma}\right) = 0
\]
Maximizing over $\gamma$ for $\sqrt{\epsilon}\ge0$ (dropping the negative solution):
\[
\sqrt{\epsilon}= \dfrac{-2\sqrt{\gamma} + \sqrt{4\gamma-4\left(\tfrac{2}{3}\tfrac{1}{\rho_{\textrm{GW}}}\tfrac{\gamma}{\gamma-1}-1\right)}}{2}
\]
we get the final optimal answer for $\gamma\approx 11.1$ and $\epsilon\approx 0.000612$. 

\end{proof}

\end{document}